\newcommand{\minrank}{\mathrm{minrank}}
\newcommand{\cz}{\mathrm{CZ}}
\newcommand{\cx}{\mathrm{CX}}
\newcommand{\xcx}{\mathrm{XCX}}
\title{Harnessing the Power of Long-Range Entanglement for Clifford Circuit Synthesis} %TODO Please add
\author{Willers Yang}{IBM Quantum, MIT-IBM Watson AI Lab, Cambridge, Massachusetts 02142, USA}{willers.yang@ibm.com}{}{}
\author{Patrick Rall}{IBM Quantum, MIT-IBM Watson AI Lab, Cambridge, Massachusetts 02142, USA}{patrickjrall@ibm.com}{}{}
\authorrunning{W. Yang and P. Rall} 
\keywords{quantum circuit optimization, Clifford group, surface code} %TODO mandatory; please add comma-separated list of keywords
\begin{document}

\maketitle

\begin{abstract}
In superconducting architectures, limited connectivity remains a significant challenge for the synthesis and compilation of quantum circuits. We consider models of entanglement-assisted computation where long-range operations are achieved through injections of large GHZ states. These are prepared using ancillary qubits acting as an “entanglement bus,” unlocking global operation primitives such as multi-qubit Pauli rotations and fan out gates. We derive bounds on the circuit size for several well-studied problems, such as CZ circuit, CX circuit, and Clifford circuit synthesis. In particular, in an architecture using one such entanglement bus, we give an $O(n^3)$-complexity synthesis scheme for arbitrary Clifford operations requiring at most $2n+1$ layers of entangled-state-injections. In a square-lattice architecture with two entanglement buses, we show that a graph state can be synthesized using at most $\lceil \frac{1}{2}n\rceil +1$ layers of GHZ state injections, and Clifford operations require only $\lceil\frac{3}{2} n \rceil+ O(\sqrt n)$ layers of GHZ state injections. 
\end{abstract}

\section{Introduction}

Unlike classical random access memories where direct access to arbitrary bits comes at a low cost, quantum operations across non-adjacent qubits often incur significant additional overhead.  One common resolution is to use SWAP gates to bring qubits to adjacent positions. However, when we use SWAP gates for qubit routing, we may suffer an overhead in depth that is linear in the number of qubits. While this is not concerning for exponential speedups in theory, in practice, this overhead could render quantum algorithms with only mild polynomial speedups useless and otherwise dull the quantum competitive edge. 

Another solution is to use long-range entanglement to implement non-local operations. Local measurements with feed-forward corrections allow us to prepare quantum states with long-range entanglement in constant depth. Using gate teleportation and similar techniques, these states can be used as a resource to implement long-range two-qubit gates and even global $n$-qubit gates. This observation suggests the following trade: sacrifice a constant fraction of the qubits to act as an ``entanglement bus'' and obtain a certain flavor of all-to-all connectivity in exchange. 

These techniques are widely considered in surface code architectures, especially lattice-surgery \cite{1808.02892}, which reformulates Clifford + T circuits in terms of ancilla-assisted multi-qubit Pauli rotations. Works on surface code routing leverage constant-depth preparation of Bell states to facilitate long-range CNOT gates \cite{2110.11493, 2204.04185}. Other models leverage Hamiltonian time evolution to implement certain $n$-qubit gates and discuss their utility towards implementing permutations \cite{2206.01766} and Clifford operations \cite{1707.06356, 2012.09061, 2207.08691}.  It is also well known that certain families of interesting quantum states in physics are easy to prepare using measurement and feedback \cite{2103.13367, 2210.17548}. Using entangled states as a resource for computation is a central idea in the field of measurement-based quantum computation \cite{quant-ph/0508124}, whose techniques enable us to trade circuit depth for circuit width.

Previous works on surface code compilation have either performed numerical studies on the speed of implementing fixed sequences of CNOT gates \cite{2110.11493}, or on asymptotic bounds on the implementation of permutations using entanglement-routing \cite{2204.04185}. Our work takes inspiration from these proposals while also exploiting the particular structure of the Clifford group and studying the leading coefficient in the synthesis performance. This approach allows us to incorporate more sophisticated optimizations, achieve highly parallelized circuits, and have competitive upper bound guarantees using a smaller fraction of ancillary qubits. 

Efficient circuit synthesis of Clifford operations is not just central to the implementation of fault-tolerant quantum algorithms, but it is also extensively studied in various other models. In particular, using single-qubit gates and two-qubit entangling gates such as the CNOT and CZ gates, we can show that arbitrary Clifford circuits can be synthesized using at most $7n-4$ layers of two-qubit gates with linear-nearest-neighbor (LNN) architecture \cite{2210.16195}, and at most $2n+O(\log^2(n))$ layers of two-qubit gates with all-to-all connectivity \cite{2201.05215}. When we allow global operations, an algorithm exists that computes the optimal decomposition of a Clifford operation into Pauli rotations, achieving a worst-case gate count of $2n+1$ \cite{2102.11380}. Finally, in architectures that allow a more powerful global tunable gate, Cifford operations may require only constant depth \cite{2207.08691}. 

%such as Linear Nearest Neighbor (LNN) \cite{quant-ph/0701194, 1205.0036, 2210.16195} and all-to-all \cite{2102.11380, 2105.02291, 2201.05215}, which offer baselines for comparison. Our main results are summarized in Table~\ref{tab:comparison}. 

%The remainder of the paper is organized as follows. First, we will describe the GHZ bus models in detail in Section~\ref{sec:Model} and relate them to other proposals in prior work. Then, we present a simple Clifford synthesis algorithm based on a Gaussian elimination method from \cite{2105.02291} that requires $3n-3$ gates in Section~\ref{sec:RotSynth}. Then we consider a more intricate optimization scheme where we achieve the optimized GHZ injection depth of $2n+1$ for Clifford synthesis using one linear GHZ bus. We also show how to achieve $\lceil \frac12\rceil n+1$ for state synthesis and $\lceil \frac32\rceil n+O(\sqrt{n})$ for Clifford synthesis using a more powerful model with two GHZ buses in with square lattice connectivity. Finally, we present some lower bounds in Appendix~\ref{app:lower_bounds}, and some additional derivations in Appendix~\ref{app:derivations_examples}.

%Previous works:
% - SC business: Beverland, Schoute, Childs
% - Unitary: MZ, MY, etc.
% - Phlalla et al.

%- Models
%- Main results:
%    - CZ: 0.5n 
%    - CX
%    - Clifford: 2n+1 in O(n^3), 1.5n + O(sqrt(n))
%- Appendix A: Ez lower bounds
%- Appendix B: Proofs for models and CZ stuff

%- GHZ -> GHZ?

The remainder of the paper is organized as follows. First, we will describe the GHZ bus models in detail in Section~\ref{sec:Model} and relate them to other proposals in prior work. Then, in Section~\ref{sec:LayeredSynth}, we will present several optimization techniques to efficiently synthesize various classes of Clifford circuits, starting with CZ circuits, CX circuits, and Hadamard-free circuits in Subsections \ref{ssec:CZ}, \ref{ssec:CX}, and \ref{ssec:Hfree} respectively. In particular, our construction for CZ circuits achieving depth $\lceil \frac{1}{2} n \rceil +1$ in a model with two GHZ buses can also be applied to the synthesis of graph states. Then, combining these optimization techniques, we arrive at our main results on Clifford synthesis in Subsection~\ref{ssec:Cliff}: firstly, in a model with one GHZ bus, we present a simplified construction achieving the optimal GHZ injection depth guarantee of $2n+1$ by \cite{2102.11380}; and secondly, in a more powerful model with two GHZ buses with square lattice connectivity, we present a highly parallelized construction achieving  GHZ injection depth $\lceil \frac32\rceil n+O(\sqrt{n})$ for Clifford synthesis. Lastly, we present some lower bounds in Appendix~\ref{app:lower_bounds}, and additional derivations in Appendix~\ref{app:derivations_examples}. Our results are summarized in Table~\ref{tab:comparison}.

\begin{table}
\begin{center}
{\color{blue}}
\begin{tabular}{p{2.3cm}||p{2.0cm}|p{4.1cm}|p{3.8cm}}\hline\hline
           Model           &  Depth Metric       &State Synthesis                         & Clifford Synthesis  \\\hline
         & & --    & $\geq 2n+1$ \cite{quant-ph/0701194}    \\
 {LNN}   & CNOT  & $\leq2n+2$ \cite{1808.02892}\footnote{Note this construction also reverses the input qubits.}    & $\leq 7n-4$ \cite{2210.16195}  \\\hline
 {All-to-all} & CNOT & $\leq\frac{n}{2} +O(\log^2(n)) $\cite{2201.05215}  & $\leq 2n+ O(\log^2(n))$ \cite{2201.05215} \\\hline
            &        &$\geq$minrank$(G)$ (clique flips)  & $\geq 0.648n - 2$ (Prop. \ref{prop:lowerbound}) \\
 {Linear GHZ Bus} &GHZ injection  &$\leq$minrank$(G)+1$ (Prop. \ref{prop:CZ(minrank)})  &  $\leq 2n +1$ (Corr. \ref{cor:cliff(2n)},\cite{2102.11380})  \\\hline
 {Dual Snake}  &GHZ injection & $\leq\lceil \frac{n}{2}\rceil + 1$ (Prop. \ref{prop:CZ(n/2)}) & $\leq \lceil \frac32\rceil n +O(\sqrt{n})$ (Corr. \ref{cor:cliff(1.5n)})\\\hline
\end{tabular}

\end{center}
\caption{{\color{blue}}\label{tab:comparison}Comparison of best-known upper and lower bounds on circuit depth for graph states and Clifford operations. The linear GHZ bus and dual snake models are proposed in this work and explained in Section~\ref{sec:Model}.}
\end{table}

\section{Models}\label{sec:Model}

Our results are chiefly inspired by surface code architectures, in which the allocation of ``entanglement bus'' qubits to facilitate long-range interactions is common in several works \cite{1808.02892, 2110.11493}. Rather than studying the capabilities of a complex architecture for large quantum circuits in practice, we design simplified models to capture \emph{only} the impact of GHZ state injection on an architecture with otherwise poor connectivity. We expect improved connectivity to be the primary benefit of GHZ state injection, and this architecture lets us quantify the improvement. In this section, we describe and discuss the capabilities of the model.

\begin{figure}[!htb]
\centering
  \includegraphics[width=\linewidth]{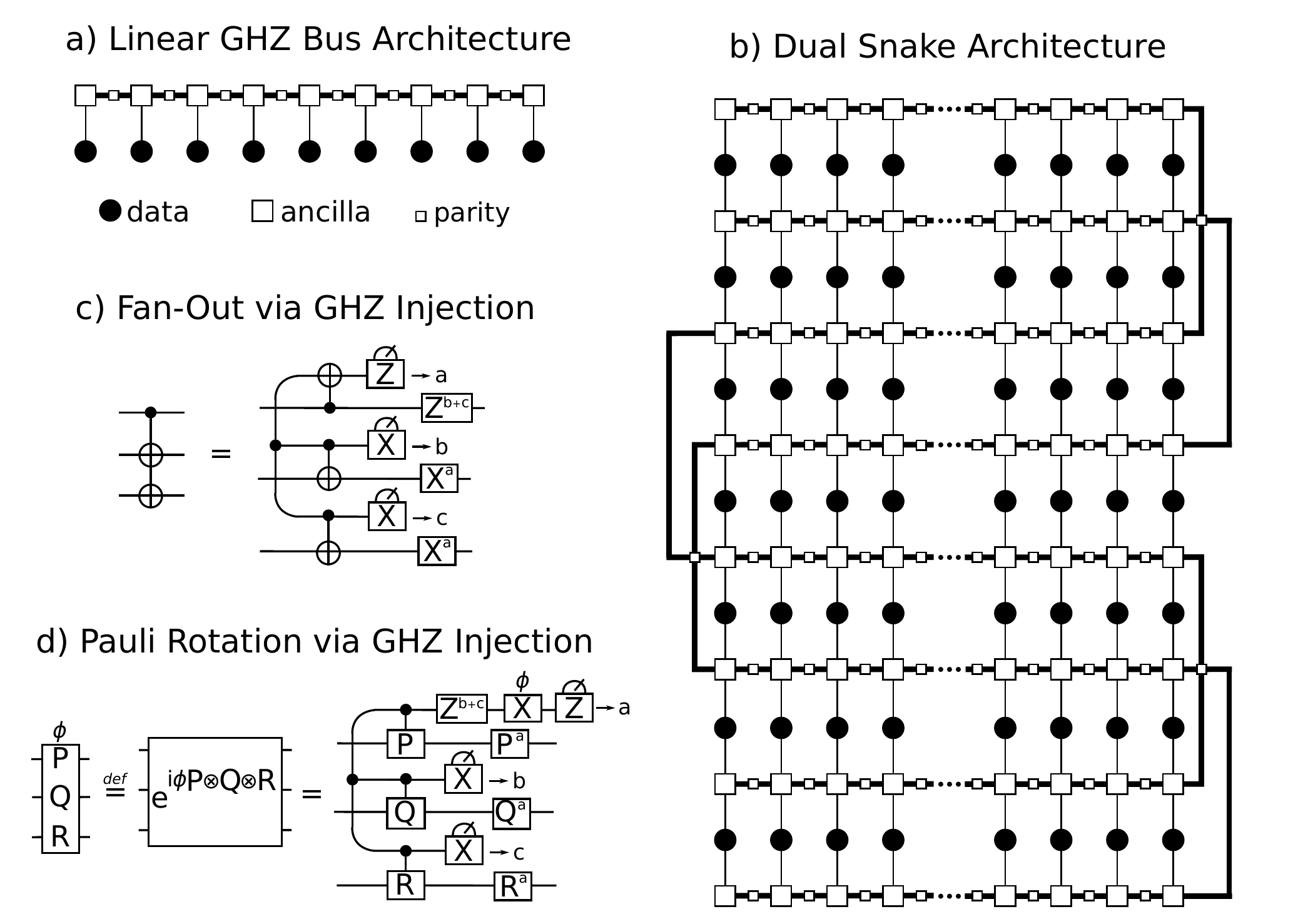}
  \caption{Architectures and primitive operations considered in this manuscript. a) A Linear Nearest Neighbor (LNN) architecture with a `GHZ bus': a `rail' of ancilla qubits reserved for the preparation of GHZ states (see Figure~\ref{fig:GHZ_prep}). b) A `dual snake' architecture compatible with a square lattice of qubits featuring two intertwining GHZ buses. In the limit of many qubits, only about half of the chip area is dedicated to ancillae. This architecture permits the parallelization of two layers of primitive operations provided they act on disjoint sets of qubits. c) Implementation of a fan-out gate using a GHZ state prepared on GHZ bus. d) Implementation of a Pauli rotation gate via injection of a GHZ state. }\label{fig:LNN_GHZ_Bus}
\end{figure}

\begin{definition} \emph{GHZ Bus Architectures} enable a set of gates acting on $n$ qubits. With the qubits enumerated $1\ldots n$, the operations are:
\begin{itemize}
 \item all single qubit gates,
 \item and $k$-qubit gates acting on $k$ adjacent qubits $i,...,i+(k-1)$ from the following families:
 \begin{itemize}
    \item CNOT fan-out: if a control qubit is $\ket{1}$, apply $X$ to any subset of the other $k$ qubits.
    \item Pauli rotation: for any phase angle $\phi$ and any multi-qubit Pauli matrix $P$ supported on the $k$ qubits, apply the unitary $\exp(i\phi P)$.
 \end{itemize}
\end{itemize}
If any operations act on non-overlapping ranges of qubits $l_1...r_1$ and $l_2...r_2$ such that $r_1<l_2$, then the gates can be performed simultaneously. A layer of parallel gates consisting entirely of the latter class of $k$-qubit gates is called a GHZ state injection layer.  Single qubit gates are considered instant/free. A linear GHZ bus architecture allows one such layer of gates at a time, while a dual snake architecture parallelizes two such layers, provided they act on disjoint sets of qubits. See Figure~\ref{fig:LNN_GHZ_Bus}.
\end{definition}

The $k$-qubit fan-out and Pauli measurement gate families are implemented via a single GHZ state preparation and injection, as shown in Figures~\ref{fig:LNN_GHZ_Bus}~a) and~b). We believe these families of operations equivalently capture the power of GHZ state injection, although their interconversion demands a different connectivity model and additional measurement feedback (see Figure~\ref{fig:interconversion} in Appendix~\ref{app:derivations_examples}). 

The relative cost of the nearest-neighbor two-qubit gates and the GHZ state injections depends on the implementation of the model. Recall that in a surface code architecture based on lattice surgery, CNOT gates are not native and require an ancilla qubit in order to implement. Furthermore, the synthesis of large ancilla patches containing GHZ states can be performed simultaneously as the preparation of a CNOT ancilla. Thus, CNOT and GHZ state injection layers have the same cost! A more general layer of nearest-neighbor interactions, such as a layer of SWAP gates, may need as much time as three GHZ state injection layers! The situation is reversed in a model in which some two-qubit gates are native. In such an architecture, GHZ state preparation (Figure~\ref{fig:GHZ_prep}) and subsequent injection require three CNOT layers and additional measurement feedback. A similar argument applies to the parity check qubits: in a surface code architecture the parity checks can be performed with no additional space cost, but this is not the case in a near-term architecture.

%To quantify the power of the models, we primarily consider the synthesis of some highly structured families of circuits: CZ circuits, CX circuits, and Clifford circuits. Not only can the well-studied structure of these families be exploited for more efficient synthesis, but bounds on their synthesis in related models are available for comparison.

We briefly compare our approach to some other works. First, \cite{2204.04185} consider a broader family of connectivity graphs than LNN, but instead of GHZ states only focus on Bell state-enabled long-range swaps, and their impact on implementing permutation circuits. Our interest in LNN specifically is that it serves as a stepping stone toward square-lattice architectures that likely capture surface code constructions' capabilities on superconducting hardware. While the study of permutation circuits is a natural approach for quantifying the power of ancilla-enabled long-range gates, we find Clifford circuits enable a richer family of optimizations. 

Second, \cite{2110.11493} investigates the performance of a square lattice architecture in which each data qubit is padded with three additional ancilla qubits for routing. In comparison, our dual snake architecture merely adds one ancilla per data qubit. Is the dual snake architecture slower for implementing Clifford gates than a layout with more ancillae? A theoretical analysis of the performance of the parallel CNOT routing considered by \cite{2110.11493} yields that $O(n^{1.5})$ layers suffice: $n$ CNOT layers suffice to implement a Clifford gate, and each layer requires $\Theta(\sqrt{n})$ operations. This bound is much looser than the $2n+1$ synthesis bound \cite{2012.09061} with just a single entanglement bus, as well as the one we derive for our dual snake scheme which achieves $\lceil\frac{3}{2}n\rceil + O(\sqrt{n})$, despite both needing fewer ancillae.  We do not know of a method for leveraging the additional routing ancilla qubits considered by \cite{2110.11493} to increase performance, so perhaps they are not necessary. We note that our dual snake architecture has GHZ buses that cross each other. This is permitted since \cite{2110.11493} shows that two Bell states can still be prepared simultaneously in such a layout.

Third, we note that the architecture admits a `clique flip' operation (defined in Figure~\ref{fig:cliffEg}~a)), which is equivalent to applying CZ on all pairs of the $k$ qubits it acts on. The name of this operation is motivated by CZ circuit synthesis: CZ circuits are equivalent to graphs, and the clique flip operation lets us toggle all the edges of the graph within a clique of our choosing. This operation appeared in \cite{1707.06356, 2012.09061} as a special case of the global Molmer-Sorensen (GMS) gate. Since clique flip operations are local-Clifford-equivalent to Pauli rotations $\exp(i \frac{\pi}{4} P)$, we find that the constructions from this line of work already capture some, but not all, of the power of the model we consider. Indeed, other than the clique flip operation, GMS gates and GHZ injections seem to have rather different capabilities and resource requirements. On the one hand, recent work \cite{2206.01766} shows that general GMS gates can implement Clifford gates in constant depth. On the other hand, GMS gates are inspired by Hamiltonian evolution on hardware with all-to-all connectivity (like ion trap quantum computers). Our GHZ bus model is more inspired by the limited connectivity of superconducting quantum computers running surface codes. While GHZ state injection can implement clique flips, a special case of the GMS gate, it is not clear how to use GHZ injection to implement general GMS gates. Similarly, while a unitary circuit with two Clique flips suffices to implement fan-out, it is unclear how to perform CNOT fan-out with just one clique flip.

Fourth, it is a well-known result in the theory of measurement-based quantum computation that Clifford gates can be implemented in constant depth on photonic hardware \cite{quant-ph/0508124}. This is achieved by rendering the Clifford circuit into a sequence of gate teleportations, causing the overall \emph{width} of the circuit to scale with the circuit complexity instead. We are interested in superconducting architectures where the width of the circuit is fixed. 

Finally, we briefly discuss the feasibility of implementing this model in current-generation IBM hardware. Broadly, the hardware seems to have the necessary capabilities: mid-circuit measurement and feed-forward correction via \emph{dynamic circuits}, as well as connectivity that enables a limited version of the linear GHZ bus model. The GHZ injection circuits present the opportunity for significant savings in circuit depth which may be useful when coherence time is a limitation. However, the mid-circuit measurements they require also introduce a lot of new noise, and qubits reserved for GHZ states cannot store data. Under what circumstances are these sacrifices worth the improvement in depth?

\section{Depth-optimized Clifford Synthesis using GHZ-states}\label{sec:LayeredSynth}
{{\color{blue}} 

First, recall that up to a layer of single qubit Pauli gates, the group of Clifford operations on $n$ qubits is isomorphic to the $2n\times 2n$ binary symplectic group $Sp(2n,\mathbb{F}_2)$. The task of synthesizing Clifford operations using a restricted set of gates is represented by the diagonalization of a binary symplectic matrix using operations corresponding to the gate set. For example, when we have only two-qubit entangling gates $\{\cx, \cz\}$, Clifford operations can be decomposed into a layered computation in the form -L-CX-CZ-H-CZ-L- \cite{2003.09412, 1808.02892}, where -L- denotes a layer of single-qubit Clifford gates, -CX- and -CZ- denotes layers of circuits consisting entirely of $\cx$ and $\cz$ gates, and -H- denotes a layer of Hadamard gates applied to all qubits. 

Some Clifford operations can be synthesized with GHZ state injections with simpler circuits. For example, both fan out gates and clique flip operations can be implemented using only one GHZ state injection, while they require a circuit of depth $\Omega(n)$ in LNN when only two-qubit gates are available. In general, Phalla et al. gave an algorithm that, with some exceptions, decomposes a Clifford operation into a minimal number of Pauli rotations $\exp(i \frac{\pi}{4} P)$ \cite{2102.11380}. Their decomposition achieves a depth of $\leq 2n+1$ that can be implemented naturally within our GHZ bus model. However, it is not often easy to obtain this decomposition since a subroutine it relies on--the triangularization of binary matrices by congruence--fails in some exceptional cases \cite{botha}. Furthermore, while \cite{botha} does not give an explicit algorithm for obtaining these decompositions, we found that an algorithm based on their work requires $O(n^4)$ time. Lastly, though the decomposition by \cite{2102.11380} is guaranteed to use the minimal \textit{number} of global operations, it does not leverage our models' additional powers, such as the ability to parallelize multiple GHZ state injections.

In the remainder of this section, we will present constructive propositions for synthesizing various Clifford circuits, such as CZ circuits, CX circuits, and Hadamard-free circuits using GHZ state injections. Together, they lead to the two main results: firstly, in the linear GHZ bus model shown in Figure~\ref{fig:LNN_GHZ_Bus}~a), we show that any Clifford operation can be decomposed into a circuit with at most $2n+1$ GHZ state injection layers using $O(n^3)$ classical computation time. Secondly, we show that square lattice connectivity supporting the dual snake architecture in Figure~\ref{fig:LNN_GHZ_Bus}~b) can do so in depth $\leq \lceil \frac{3}{2}n\rceil +O(\sqrt{n})$.  Additionally, since graph states can be prepared using a CZ circuit, our constructions can also be extended to graph state synthesis. Unless otherwise specified, we will often use depth to refer to a circuit's GHZ state injection depth; that is, the number of parallel GHZ state injection stages.}

\subsection{-CZ- Transformations and Graph State Synthesis} \label{ssec:CZ}

We first establish some results on synthesizing -CZ- layers.  The central idea underpinning these methods is that -CZ- layers are equivalent to graphs since CZ gates are symmetric, self-inverse, and mutually commuting.  A particularly convenient $k$-qubit gate for manipulating these graphs is the `clique flip' operation defined in Figure~\ref{fig:cliffEg}~a), which implements a CZ gate on all pairs of qubits involved. In the graph representation, this operation toggles all edges in a clique. We defer the proofs of these results to Appendix~\ref{app:CZsynth}.

{{\color{blue}}
By relating -CZ- circuits to graphs and their adjacency matrices, we can arrive at the following bound:

\begin{proposition}\label{prop:CZ(minrank)} 
     Let $G(V,E)$ represent an n-qubit CZ circuit, and let $t(G)$ be the minimum number of clique flips required to implement $G$. Then, $\minrank_2(G)\leq t(G) \leq \minrank_2(G)+1$, where,
    $$
    \minrank_2(G) = \min \{\mathrm{rank}_{\mathbb{F}_2}(D\oplus G) | D\in \text{diag}(\{0,1\}^n)\}.
    $$
\end{proposition}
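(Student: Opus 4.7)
The plan is to translate clique flips into symmetric rank-one outer products over $\mathbb{F}_2$ and invoke the classification of symmetric bilinear forms over $\mathbb{F}_2$.

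First I would observe that a clique flip on a vertex set $S$ with indicator vector $v_S \in \mathbb{F}_2^n$ toggles exactly the off-diagonal entries of the rank-one symmetric matrix $v_S v_S^{\top}$ in the adjacency matrix. Consequently, a sequence of $t$ clique flips with indicators $v_1, \ldots, v_t$ implements $G$ if and only if there exists a diagonal $D \in \mathrm{diag}(\{0,1\}^n)$ with $G + D = \sum_{i=1}^{t} v_i v_i^{\top}$ over $\mathbb{F}_2$. The lower bound is then immediate: the right-hand side has $\mathbb{F}_2$-rank at most $t$, so $\minrank_2(G) \le \mathrm{rank}_{\mathbb{F}_2}(G + D) \le t$.

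For the upper bound, I would pick $D^{*}$ achieving $\mathrm{rank}_{\mathbb{F}_2}(G + D^{*}) = r$ and set $M = G + D^{*}$. Case 1: if $M$ is non-alternating (some diagonal entry equals $1$), then the classification of symmetric forms over $\mathbb{F}_2$ gives that $M$ is congruent to $I_r \oplus 0_{n-r}$, so $M = Q^{\top} Q$ for some $r \times n$ matrix $Q$, yielding a Gram decomposition $M = \sum_{i=1}^{r} v_i v_i^{\top}$ and $t(G) \le r$. Case 2: if $M$ is alternating, then its zero diagonal forces $D^{*} = 0$, so $G$ itself has rank $r$ (necessarily even). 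Here I would perturb by a single diagonal unit $e_j e_j^{\top}$: the matrix $G + e_j e_j^{\top}$ is non-alternating (its $(j,j)$ entry is $1$) and has rank at most $r + 1$ by the rank-one perturbation bound; applying Case 1 gives a decomposition into at most $r + 1$ outer products, and since $e_j e_j^{\top}$ only affects the diagonal, the off-diagonal entries still reproduce $G$, so $t(G) \le r + 1$.

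The main technical ingredient is the Gram decomposition step: every non-alternating symmetric matrix $N$ of rank $r$ over $\mathbb{F}_2$ admits a factorization $N = \sum_{i=1}^{r} v_i v_i^{\top}$. I would either cite the canonical form or prove it by induction on $r$: pick an index $i$ with $N_{ii} = 1$, let $v = N e_i$, observe that $e_i \in \ker(N + v v^{\top})$ while $e_i \notin \ker N$, so $N + v v^{\top}$ is symmetric of rank exactly $r - 1$, and recurse. A minor bookkeeping remark: outer products $v_i v_i^{\top}$ with $|v_i| \le 1$ correspond to trivial (empty) clique flips and may be discarded, so the bounds on the number of actual clique flips hold as stated.
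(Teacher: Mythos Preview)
Your approach is essentially the same as the paper's: both translate clique flips into symmetric rank-one updates over $\mathbb{F}_2$ and reduce the upper bound to a Gram-type factorization of a minimum-rank symmetric matrix $G\oplus D^*$. The paper invokes Lempel's factorization directly (yielding $G^*=FF^\top$ with $r$ columns when $D^*\neq 0$ and $r{+}1$ columns when $D^*=0$, exactly matching your non-alternating/alternating case split), while you cite the equivalent classification of symmetric bilinear forms over $\mathbb{F}_2$. Your lower-bound argument via $\mathrm{rank}\bigl(\sum_i v_iv_i^\top\bigr)\le t$ is in fact a bit more direct than the paper's appeal to sub-additivity of $\minrank_2$.

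One caveat worth flagging: the \emph{optional} inductive proof you sketch for the Gram decomposition has a gap. After subtracting $vv^\top$ with $v=Ne_i$, the residual $N'=N+vv^\top$ can become alternating, so the induction hypothesis (stated only for non-alternating matrices) no longer applies. For example, with
\[
N=\begin{pmatrix}1&1&0\\1&1&1\\0&1&0\end{pmatrix},\qquad i=1,\qquad v=(1,1,0)^\top,
\]
one gets
\[
N'=\begin{pmatrix}0&0&0\\0&0&1\\0&1&0\end{pmatrix},
\]
which is alternating of rank $2$ and cannot be written as a sum of two symmetric rank-one matrices over $\mathbb{F}_2$. Your citation route (canonical form, equivalently Lempel) is correct and makes the proof complete; only the self-contained induction would need patching (e.g.\ by showing one can always choose the pivot so that $N'$ stays non-alternating, or by handling the alternating residual separately).
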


We can show stronger bounds when considering specific classes of CZ circuits. One example application of Proposition~\ref{prop:CZ(minrank)} is on CZ circuits represented by random graphs. We can consider Erdos–Renyi random graphs $G(n,1/2)$, where each edge appears with probability $1/2$. We know that for any $g\in G(n,1/2)$, $\minrank(g)> n-\sqrt{2n}$ almost always as $n\to\infty$ \cite{1006.0770}. It also follows that the number of clique flips needed for a graph sampled randomly from $G(n,1/2)$ is almost always about $n$ once $n$ is large enough. I.e., for any $\epsilon, \delta>0$, there exists a $n^*$ s.t. for all $n\geq n^*$, $t(g)\geq (1-\epsilon)n$ with probability $1-\delta$ for randomly sampled $g\in G(n,1/2)$. 

Similar to the drawbacks of \cite{2102.11380}, computation and depth optimization of the circuit becomes a non-trivial task even though the algorithm guarantees the optimal number of clique flips needed. Hence,} we will also survey a simpler method by \cite{2012.09061}, which works by disentangling qubits one by one, as illustrated in Figure~\ref{fig:cliques}~a).

\begin{proposition}\label{prop:CZ(n-1)} \cite{2012.09061}
    Any CZ transformation can be implemented as a circuit using at most n-1 GHZ state injections.
\end{proposition}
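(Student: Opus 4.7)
The plan is a straightforward induction on $n$ using a qubit-by-qubit disentangling strategy, leveraging the fact that $\cz$ gates are self-inverse, commute with one another, and so a -$\cz$- circuit is uniquely specified by the graph $G$ whose edges indicate which $\cz$ gates appear. The key primitive I will use is that a clique flip on \emph{any} subset $S$ of qubits along the bus costs one GHZ state injection: up to free single-qubit Cliffords it is the Pauli rotation $\exp(i\tfrac{\pi}{4}P)$ on the smallest contiguous range enclosing $S$, with identity Pauli factors on the qubits of that range not in $S$. In graph language, such an operation XORs the edge-indicator of the complete graph on $S$ into the adjacency matrix of $G$.

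For the inductive step, given a graph $G$ on $n$ vertices, I would pick any vertex---say qubit $1$---and consider its neighborhood $N(1) \subseteq \{2,\ldots,n\}$. If $N(1)$ is empty, qubit $1$ is already decoupled and I recurse on the induced subgraph on $\{2,\ldots,n\}$, costing at most $n-2$ clique flips by induction. Otherwise, I apply a single clique flip on $\{1\}\cup N(1)$: this toggles every edge incident to qubit $1$ to zero, cleanly isolating it, while possibly toggling some edges inside $N(1)$. What remains is a simple graph $G'$ on the $n-1$ qubits $\{2,\ldots,n\}$, which by the inductive hypothesis takes at most $n-2$ more clique flips. Summing gives $1+(n-2)=n-1$. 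The base case $n=1$ is vacuous, with no edges and zero clique flips.

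The two things I will need to check carefully are (i) that a clique flip on an arbitrary subset really is a single legal GHZ injection in the model, which follows from the contiguous-range Pauli rotation primitive plus identity factors, and (ii) that XORing the complete graph on $N(1)$ into $G$ yields another simple graph, so the inductive hypothesis continues to apply---this is immediate. I do not expect a real obstacle; the mildest subtlety is that edges inside $N(1)$ get modified by the first clique flip and must be folded into the subsequent inductive call, but this is automatic from the statement of the inductive hypothesis, which makes no assumption on the structure of the residual graph on $\{2,\ldots,n\}$. A consequence worth recording in the writeup is that the resulting schedule naturally lays out as $n-1$ sequential GHZ injection layers, exactly matching the stated bound without needing a final clean-up layer, since after $n-1$ disentanglings the residual graph has a single vertex and therefore no edges.
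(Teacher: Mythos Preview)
Your proposal is correct and essentially identical to the paper's proof: both isolate one vertex at a time by applying a single clique flip on that vertex together with its current neighborhood, leaving a residual graph on one fewer vertex. The only cosmetic difference is that you phrase it as an induction on $n$ whereas the paper unrolls the same process as an explicit iterative construction $G_1,\ldots,G_{n-1}$ with $G_i$ the complete graph on $v_i$ and its remaining neighbors.
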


\begin{figure}[!htb]
\centering
  \includegraphics[width=\linewidth]{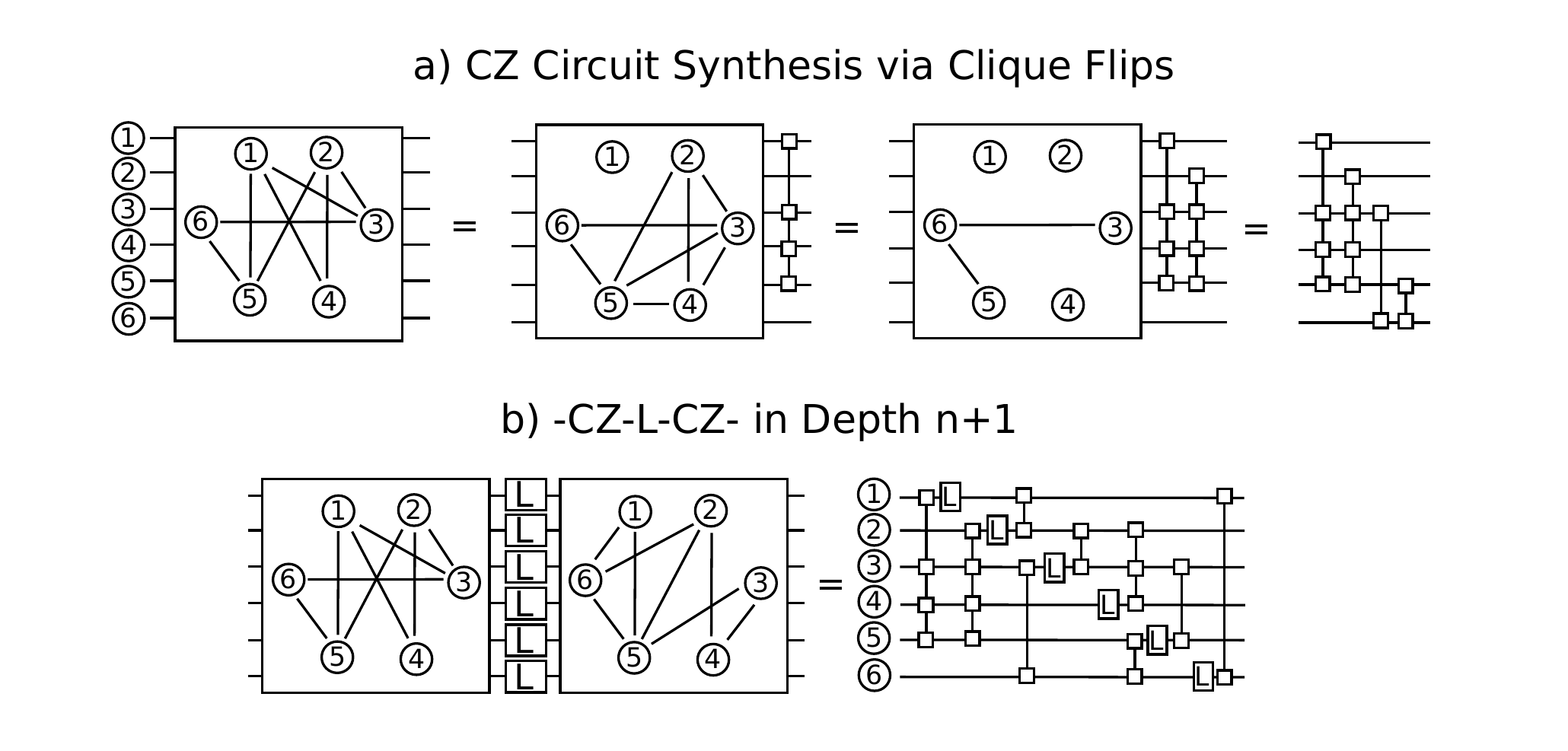}
  \caption{Synthesis of CZ circuits using the clique flip operation defined in Figure~\ref{fig:cliffEg}a). a) Synthesis of a CZ circuit using $\leq n-1$ clique flips from Proposition~\ref{prop:CZ(n-1)}, also shown by \cite{2012.09061}. b) Illustration of the optimization from Proposition~\ref{prop:CZ(n+1)} with two examples of -CZ- layers given by the graphs in the figure. While each CZ layer individually can be synthesized using $\leq n-1$ clique flips following Proposition~\ref{prop:CZ(n-1)}, two such circuits can be slotted together to optimize depth.}\label{fig:cliques}
\end{figure}

\cite{2012.09061}'s construction illustrates a key optimization opportunity: ``stacking''. We observe that as each clique flip disentangles a qubit, successive clique flips contain more isolated vertices, giving ample opportunities for parallelization. If we pick particular orders to disentangle the qubits, we can arrange the circuit in various ``staircases''. Using this idea, we arrive at the following constructions for parallelized circuits implementing -CZ- layers.

\begin{proposition}\label{prop:CZ(n+1)}
    -CZ-L-CZ- can be implemented as a circuit using $2n-2$ clique flips, implementable in GHZ-state-injection depth $n+1$ using a linear GHZ bus.
\end{proposition}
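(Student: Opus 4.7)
The plan is to synthesize each of the two -CZ- layers via Proposition~\ref{prop:CZ(n-1)} and then slot the resulting clique flips into complementary staircases on the bus. For the first layer, I would run the disentangling procedure in qubit order $1, 2, \ldots, n-1$, producing clique flips $F_1, \ldots, F_{n-1}$ with $F_k$ supported on $\{k\} \cup N_k(k) \subseteq \{k, k+1, \ldots, n\}$, so its bus range is contained in $[k, n]$. For the second layer, I would run the same procedure in the \emph{reverse} qubit order $n, n-1, \ldots, 2$, producing clique flips $S_n, \ldots, S_2$ with $S_j$ supported in $\{1, \ldots, j\}$ and bus range in $[1, j]$. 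In total this uses $2n-2$ clique flips, matching the first claim.

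For the depth bound, the key observation is that all clique flips within a single -CZ- layer commute, since they are diagonal in the computational basis, and may therefore be applied in any order. I would schedule $F_k$ at time $k$ for $1 \leq k \leq n-1$, and $S_j$ at time $j+1$ for $2 \leq j \leq n$. At each time $k$ with $3 \leq k \leq n-1$, the operations $F_k$ (bus range in $[k, n]$) and $S_{k-1}$ (bus range in $[1, k-1]$) have disjoint ranges and can be executed in parallel as a single GHZ injection layer. The only unpaired slots are times $1, 2$ (holding $F_1, F_2$) and times $n, n+1$ (holding $S_{n-1}, S_n$), giving total depth $n+1$.

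The main subtlety, and the one place the proof needs real care, is that the intermediate single-qubit Clifford layer -L- does not in general commute with clique flips touching the qubit it acts on, so I must verify that each $L_j$ can be inserted consistently into the interleaved schedule. Observe that $L_j$ commutes with every $F_m$ having $m > j$ (as such $F_m$ avoids qubit $j$) and with every $S_i$ having $i < j$. Hence $L_j$ only needs to be placed after $F_{\min(j,\,n-1)}$ completes and before $S_{\max(j,\,2)}$ begins; in the schedule above these are in consecutive time slots, so $L_j$ fits in the zero-width gap between them. This confirms that -CZ-L-CZ- can be realized in $2n-2$ clique flips using GHZ state injection depth $n+1$.
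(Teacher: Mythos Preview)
Your proof is correct and follows essentially the same approach as the paper: decompose each \textsf{-CZ-} into $n-1$ clique flips via Proposition~\ref{prop:CZ(n-1)}, one in increasing and one in decreasing qubit order, and stack the two resulting staircases on disjoint bus ranges. Your indexing and pairing ($F_k$ with $S_{k-1}$) is in fact cleaner than the paper's, and your explicit treatment of the \textsf{-L-} layer---verifying that each $L_j$ can be slotted between $F_j$ and $S_j$ because it commutes with all later $F_m$ and all earlier $S_i$---fills in a detail the paper's proof leaves implicit.
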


\begin{proposition}\label{prop:CZ(n/2)}
    Any CZ transformation can be implemented as a circuit with GHZ-state-injection depth $\lceil n/2\rceil +1$ in an architecture with two parallel GHZ buses, such as the dual snake architecture.
\end{proposition}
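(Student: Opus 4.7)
The plan is to adapt the interlocking staircase technique from Proposition~\ref{prop:CZ(n+1)} to the dual snake setting, using the two buses in parallel to roughly halve the depth relative to Proposition~\ref{prop:CZ(n-1)}.

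I would begin by splitting the data qubits into two contiguous halves, $A = \{1, \ldots, \lceil n/2 \rceil\}$ and $B = \{\lceil n/2 \rceil + 1, \ldots, n\}$, and decomposing the CZ graph as $G = G_A \cup G_B \cup G_{AB}$, where $G_A$ and $G_B$ are the intra-half subgraphs on $A$ and $B$ and $G_{AB}$ is the bipartite subgraph of cross-edges. I would then schedule the disentangling staircase of Proposition~\ref{prop:CZ(n-1)} for $G_A$ on bus~1 and the analogous staircase for $G_B$ on bus~2 to execute concurrently; since the two halves are data-qubit disjoint, the dual snake permits this parallelization, and each staircase completes in at most $\lceil n/2 \rceil - 1$ layers.

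The crux of the argument is absorbing the bipartite cross-edges $G_{AB}$ into the same depth budget. Processing $G_{AB}$ separately via CZ fan-outs would cost $\Theta(n)$ extra depth, so instead I would widen each staircase step of bus~1 to incorporate the relevant cross-edges: when disentangling qubit $t \in A$, the clique flip would be extended to include a subset $S_t \subseteq B$ chosen to exactly toggle $t$'s cross-edges in $G_{AB}$. This widening introduces spurious CZ edges inside $N_A(t) \cup S_t$, which are then compensated for by reshaping bus~2's concurrent staircase step at the same layer so that it both disentangles its target qubit in $B$ and cancels the spurious contribution on $S_t$. A single additional GHZ injection layer at the end performs any final residual cleanup, yielding the claimed depth of $\lceil n/2 \rceil + 1$.

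The main obstacle is to prove that this interleaving remains consistently schedulable. At each step, bus~2's reshaped flip must stay a single GHZ injection, and the evolving residual on $B$ must remain within a structure that bus~2's staircase can absorb without exceeding its $\lceil n/2 \rceil - 1$ budget. This likely calls for an inductive invariant on the residual adjacency matrix, for instance a rank or block-triangular condition over $\mathbb{F}_2$, analogous in spirit to how the interlocking staircases of Proposition~\ref{prop:CZ(n+1)} mesh on the linear bus. If a generic $G$ does not directly admit the required invariant, a short preprocessing phase---for example, a single layer applying a clique flip that reshapes the bipartite block---should be inserted before the parallel staircases begin.
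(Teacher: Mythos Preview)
Your plan has a real obstruction at the cross-edge absorption step. When you widen bus~1's clique at step $t$ to $\{t\}\cup N_A(t)\cup S_t$ with $S_t\subseteq B$, the flip toggles not only the edges inside $S_t$ but also every pair in $N_A(t)\times S_t$; these are brand-new \emph{cross}-edges between the halves. For bus~2's flip to run in parallel in the dual snake model it must act on qubits disjoint from bus~1's support, so in particular it cannot touch $N_A(t)\subseteq A$ and hence cannot cancel those $N_A(t)\leftrightarrow S_t$ edges. Even setting that aside, asking a single clique flip on bus~2 to simultaneously disentangle its own target $b\in B$ and cancel the complete graph on $S_t$ is asking one clique to realize the symmetric difference of two generically unrelated cliques, which is not a clique. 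You correctly flag that an inductive invariant would be needed here, but none is supplied, and I do not see one that rescues this ordering; the ``single cleanup layer'' at the end certainly cannot absorb an arbitrary bipartite residual.

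The paper's construction reverses your order of operations and thereby sidesteps the issue. It removes the cross-edges \emph{first}, in $\lceil n/2\rceil$ layers, by pairing at step $i$ the outermost unprocessed qubits $v_i\in V_l$ and $v_{n-i+1}\in V_r$ and branching on whether $(v_i,v_{n-i+1})$ is currently a cross-edge. If it is, one clique on $\{v_i\}\cup N_C(v_i)\cup\{v_{n-i+1}\}\cup N_C(v_{n-i+1})$ clears all cross-edges at both anchors with a single injection; if not, the two sets $\{v_i\}\cup N_C(v_i)$ and $\{v_{n-i+1}\}\cup N_C(v_{n-i+1})$ are vertex-disjoint, so two parallel clique flips on the two buses do it. The simple invariant is that already-processed anchors never reacquire cross-edges, so after $\lceil n/2\rceil$ steps the cut is empty and the cross-edge phase itself has a staircase footprint. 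The two intra-half staircases for $V_l$ and $V_r$ then slot into that footprint (as in Proposition~\ref{prop:CZ(n+1)}) at the cost of one extra layer, giving $\lceil n/2\rceil+1$.
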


The main idea for Proposition~\ref{prop:CZ(n+1)} is to disentangle qubits in opposite orders so they can be ``stacked'' together, as illustrated in Figure~\ref{fig:cliques}~b); and the main idea for Proposition~\ref{prop:CZ(n/2)} is to cut the graph in two halves, disentangle across the cut, and deal with remaining edges in the two sub-graphs in parallel, as illustrated in Figure~\ref{fig:CZeg}. Proposition~\ref{prop:CZ(n/2)} will later play a central role in our construction for general Clifford gates.

Aside from being useful for Clifford synthesis, the ability to implement arbitrary -CZ- transformations are also closely related to stabilizer state preparation. As shown in \cite{quant-ph/0611214}, all stabilizer states are equivalent to graph states up to an -L- layer, which are a -CZ- layer applied to $\ket{+}^{\otimes n}$. Thus, Proposition~\ref{prop:CZ(n/2)} shows that the dual snake architecture can also prepare stabilizer states in depth $\lceil n/2\rceil +1$.

% We can always implement both in parallel if the cliques fall entirely in $V_l$ or $V_r$. However, in each of the two cliques, there is a single vertex that comes from the opposite side of the cut. This results in clique flips on two disjoint sets of qubits, but the corresponding GHZ states may overlap.

\subsection{-CX- Transformations}\label{ssec:CX}
Now, we turn our focus to -CX-, denoting an $n$-bit reversible boolean transformation. Let us represent -CX- as an invertible binary matrix $M$, where each column of $M$ corresponds to the output state of a qubit in terms of the inputs. Traditionally, we find a circuit for -CX- by diagonalizing $M$ with column operations (corresponding to CNOT gates) \cite{quant-ph/0701194} which has the best-known depth of $5n$ in LNN \cite{quant-ph/0701194} and $n + o(n)$ in all-to-all \cite{2201.05215}\footnote{The asymptotic optimal depth is $O(\frac{n}{\log{n}})$ \cite{1907.05087}. We do not consider it here due to its impractically large constant overhead.}. With a GHZ bus, we unlock additional abilities to perform up to $n$ row operations simultaneously using fan-out and fan-in gates\footnote{A fan-in is equivalent to fan-out up conjugation by a layer of Hadamard gates.}, which are implementable using one GHZ state.

\begin{proposition}\label{prop:CX(n)}
    Up to a relabeling of qubits, any reversible boolean transformation can be implemented using n fan-outs using a linear GHZ bus.
\end{proposition}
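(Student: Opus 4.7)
The plan is to use PLU decomposition together with a factorization identity for products of fan-outs with fixed distinct controls. First, I would represent the reversible boolean transformation as an invertible matrix $M \in GL_n(\mathbb{F}_2)$ and invoke the PLU decomposition $M = PLU$, which exists for any invertible matrix over any field (Gaussian elimination with partial pivoting). Here $P$ is a permutation matrix corresponding precisely to the ``relabeling of qubits'' allowed by the proposition, while $L$ is unit lower triangular and $U$ is unit upper triangular. Since $P$ costs nothing in our model, the remaining task is to implement $LU$ on the linear GHZ bus using $n$ fan-outs.

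The key lemma I would establish is that, for fan-outs $F_i$ with control qubit $i$ and target set $T_i \subseteq \{1,\ldots,n\} \setminus \{i\}$, the product $F_1 F_2 \cdots F_n$ ranges exactly over matrices of the form $L'U'$ with $L'$ unit lower triangular and $U'$ unit upper triangular. To prove this, I write $F_i = I + \sum_{t \in T_i} E_{ti}$ (with $E_{ab}$ the matrix unit with a single $1$ in entry $(a,b)$) and split each target set into a ``downward'' part $T_i^+ = T_i \cap \{i+1,\ldots,n\}$ and an ``upward'' part $T_i^- = T_i \cap \{1,\ldots,i-1\}$. Because right-multiplication by $F_i$ only modifies column $i$, an inductive computation over $i = 1, 2, \ldots, n$ shows that the $T_i^+$ data assembles into a unit lower triangular factor while the $T_i^-$ data assembles into a unit upper triangular factor, with the whole product factoring as their matrix product. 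Given $L$ and $U$ from the PLU step, the target sets are then recovered by reading off the below-diagonal entries of $L$ and the above-diagonal entries of $U^{-1}$ (which is again unit upper). Sequencing the resulting $n$ fan-outs on the linear GHZ bus implements $LU$ in $n$ GHZ injection layers, and the permutation $P$ is absorbed into qubit relabeling, completing the proof.

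The main obstacle is the bookkeeping in the factorization lemma. The saving feature is that since $F_i$ has control $i$ with targets disjoint from $i$, the cross terms in the expansion of a product $F_i F_j$ (with $i < j$) satisfy $E_{ti} E_{sj} = \delta_{is} E_{tj}$ and therefore survive only when $i \in T_j$. This constrains the iterative product structure just enough to reproduce the $L'U'$ form, and conversely every unit lower $\times$ unit upper product arises this way, which makes the identification with PLU tight and yields exactly $n$ fan-outs.
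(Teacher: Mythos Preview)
Your proof is correct but takes a genuinely different route from the paper's. The paper gives a three-line Gaussian-elimination argument: at step $i$, locate the first nonzero entry $\sigma_i$ of the current column $c_i$ and use a single fan-out controlled on qubit $i$ to add $c_i$ to every other column with a $1$ in row $\sigma_i$; after $n$ steps the matrix is the permutation $i\mapsto\sigma_i$. There is no PLU, no factorization lemma, and no closed form for the target sets --- just an iterative reduction.

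Your approach trades this for a structural statement: the products $F_1\cdots F_n$ with fixed distinct controls $1,\ldots,n$ are \emph{exactly} the matrices admitting an $LU$ factorization, and the target sets are read off from $L$ and $U^{-1}$. The cleanest way to justify your ``inductive computation'' is to note that the lower part $L_i=I+\sum_{t\in T_i^+}E_{ti}$ and the upper part $V_i=I+\sum_{t\in T_i^-}E_{ti}$ satisfy $F_i=L_iV_i$ and that $V_iL_j=L_jV_i$ whenever $i<j$ (both cross terms $E_{ti}E_{sj}$ and $E_{sj}E_{ti}$ vanish since $t<i<j<s$); this lets you shuffle all $V_i$ to the right and obtain $F_1\cdots F_n=(L_1\cdots L_n)(V_1\cdots V_n)=LU$. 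Your sketch gestures at this but does not quite isolate the commutation relation, so it reads as slightly hand-wavy where the paper's argument is self-contained. On the other hand, your version buys a closed-form description of the fan-outs and an exact characterization of the image of the map $(T_1,\ldots,T_n)\mapsto F_1\cdots F_n$, which the paper's proof does not provide and which could be useful for the later propositions that build on this one.
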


\begin{proof}
    For $i \in [n]$, let $c_i$ be the $i^{th}$ column of $M$ and let $\sigma_i$ be the index of the first non-zero element of $c_i$. With one fan-out, we can add $c_i$ (modulo 2) to all columns $c_j$, $i\neq j$, where $c_j[\sigma_i]$ is non-zero. This reduces $M$ to the permutation matrix given by $[n]\mapsto \{\sigma_1,...,\sigma_n\}$.
\end{proof}

\begin{proposition}\label{prop:CX(2n)}
    Any reversible boolean transformation can be implemented as a circuit with GHZ-state-injection depth 2n-1 using a linear GHZ bus.
\end{proposition}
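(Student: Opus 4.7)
The plan is to build on Proposition~\ref{prop:CX(n)} by additionally implementing the leftover permutation. Applying the $n$ input-side fan-outs of Proposition~\ref{prop:CX(n)} reduces $M$ to a permutation $\pi$; I would then implement $\pi$ using at most $n - 1$ additional GHZ injection layers, giving a total of $2n - 1$.

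For the permutation step, the key observation is that a single fan-out (one control, many targets) toggles many qubit values simultaneously, so a cyclic shift across $k$ qubits can be realized using $O(1)$ fan-outs rather than the $3(k-1)$ layers a SWAP-based approach would need. For example, a $3$-cycle on three qubits can be implemented in four layers by cycling the control through the three qubits (each fan-out having control $i$ and targets consisting of the other two qubits), a strict improvement over the six layers of the naive two-SWAP decomposition. Decomposing $\pi$ into its cycles and scheduling them --- in parallel when their ranges are disjoint, sequentially otherwise --- then yields the $n - 1$ bound on the total permutation depth.

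The main obstacle will be rigorously bounding the permutation-implementation depth by $n - 1$, since on a linear architecture even a single long cycle spans the full line and naively requires $\Omega(n)$ layers of routing. An alternative approach that avoids forming the explicit permutation intermediate is a direct two-sided Gaussian elimination: at each recursion step we use one input-side fan-out to clear the first row of the current submatrix and one output-side fan-out to clear the first column, giving the recurrence $T(n) = T(n-1) + 2$ and hence $T(n) = 2(n-1)$, plus one additional operation to handle pivoting at the start (when some $M_{k,k}$ vanishes and a row must be combined into position before clearing), for a total of $2n - 1$. Absorbing pivot selection into the clearing fan-outs rather than paying the naive $3(n-1)$ SWAPs to realize an explicit permutation is the central technical challenge of either route.
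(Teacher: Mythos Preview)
Your second route --- two-sided elimination with a recurrence $T(n)=T(n-1)+2$ --- has a real gap: you budget \emph{one} extra operation for pivoting ``at the start,'' but a zero on the diagonal can occur at \emph{every} step of the recursion, not just the first. So without further argument your count is $2(n-1)$ plus up to $n$ pivot fixes, not $2(n-1)+1$. Your first route (reduce to a permutation, then route it) you yourself flag as unfinished, and indeed bounding arbitrary-permutation depth by $n-1$ on a line with fan-outs is not obvious and is not what the paper does.

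The paper avoids both difficulties by interleaving pivot repair into the one-sided elimination of Proposition~\ref{prop:CX(n)} rather than doing a separate permutation phase or a second sweep. Concretely: process columns $i=1,\dots,n$; before the $i$th fan-out, if the leading nonzero of $c_i$ is not in row $i$, spend one long-range CNOT to add into $c_i$ some column $c_j$ (with $j\geq i$) whose leading nonzero \emph{is} in row $i$ --- such $j$ exists by invertibility. Then the fan-out controlled by qubit~$i$ clears row~$i$ everywhere else. This is at most two injection layers per column, and after columns $1,\dots,n-1$ are processed the last column automatically has $\sigma_n=n$, so the final CNOT is unnecessary, giving $2n-1$. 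The key idea you were circling --- absorb pivot selection into the elimination rather than paying for a full permutation --- is exactly right; the fix is to pay one CNOT per step for it, not one CNOT total.
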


\begin{proof}
    To ensure the previous algorithm reduced $M$ to a trivial permutation matrix, we need to ensure $\sigma_i = i$ for each $i$. This requires up to one long-range CNOT per fan-out: if $\sigma_i = i$, we are already done; otherwise, we can find $c_j$ where $\sigma_j = i$, and add $c_j$ to $c_i$ using one long-range CNOT. Such $c_j$ always exists for some $j\geq i$; otherwise, $M$ cannot be full rank. 

    We notice that after performing $n-1$ fan-outs controlled by qubits $1,...,n-1$, the last column must have $n = \sigma_n$. Therefore, no additional CNOT is needed for the last fan-out, giving us the depth of $2n-1$ as desired.
\end{proof}

\subsection{Hadamard-Free Clifford Transformations}\label{ssec:Hfree}
Recall that in LNN, a -CZ- layer immediately adjacent to a -CX- layer can be implemented at no additional cost \cite{2210.16195}. This fact also holds in the new model: we give a method for absorbing the -CZ- layer into the -CX- layer. The method below relies on circuit identities relating CZ and CX shown in Figure~\ref{fig:CXCZrules}. For additional clarity, we also give an example of the procedure in Figure~\ref{fig:HFreeEx}.

\begin{figure}[!htb]
\centering
  \includegraphics[width=\linewidth]{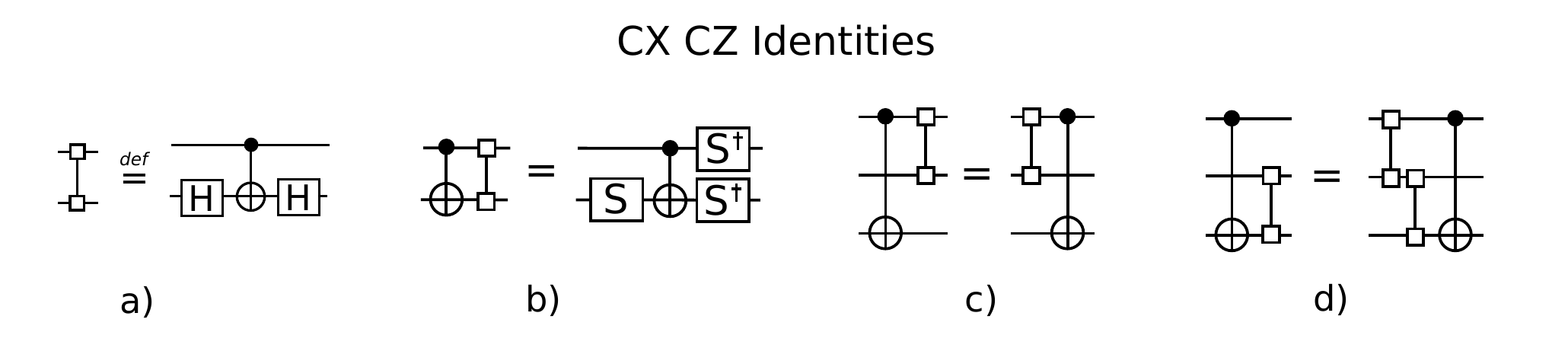}
  \caption{Some identities for commuting a CZ gate through a CNOT gate.}\label{fig:CXCZrules}
  \end{figure}

\begin{figure}[!htb]
\centering
  \includegraphics[width=\linewidth]{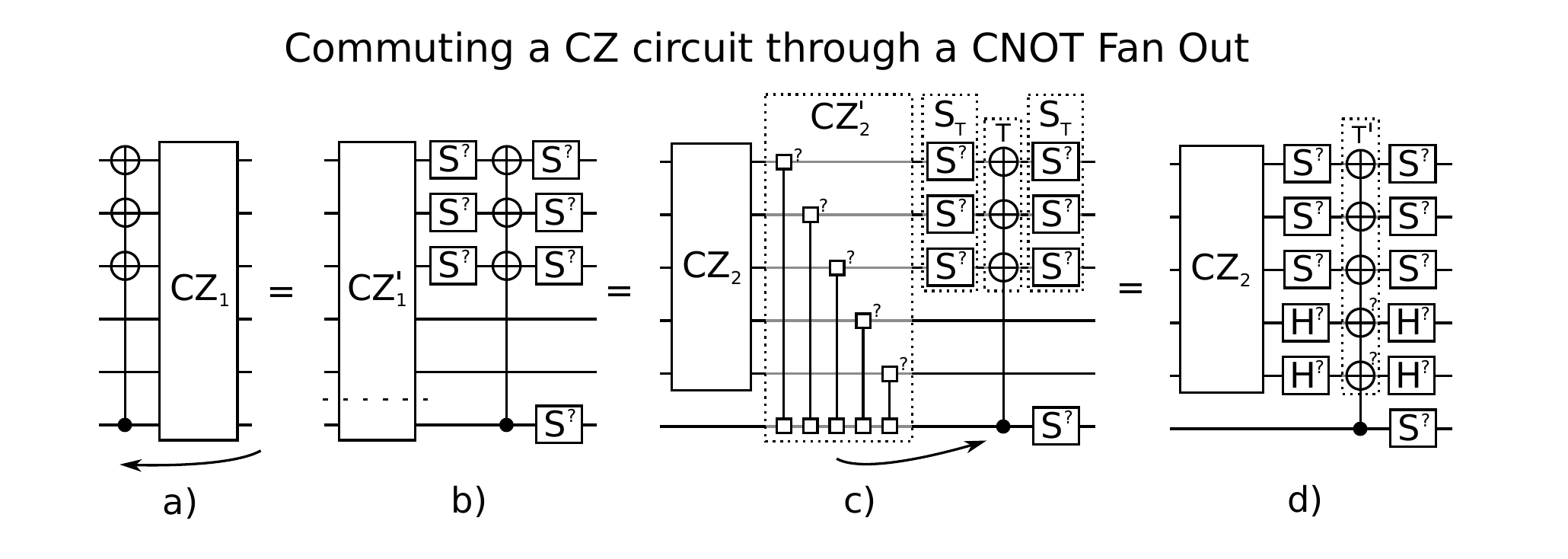}
  \caption{Example of the optimization performed in Proposition~\ref{prop:Hfree(n)}, which absorbs a CZ circuit into a sequence of CNOT fan-outs. a) We leverage the identity in Figure~\ref{fig:CXCZrules}~b) to absorb some gates from $CZ_1$ into some $S$ gates acting on $T \subset [n]$, resulting in $CZ_1'$. b) Gates in $CZ_1'$ touching the control qubit of the fan-out are extracted into $CZ_2'$, with $CZ_2$ left over. c) $CZ_2'$ is absorbed into the fan-out by either removing some $S$ gates or adding additional targets conjugated by $H$ (resulting in $T'\subset [n]$). d) After commuting, the CZ circuit does not touch the target of the fan-out anymore. }\label{fig:HFreeEx}
  \end{figure}

\begin{proposition}\label{prop:Hfree(n)}    
Up to a permutation, any Hadamard-free Clifford transformation can be implemented as a circuit with $n$ fan-outs using a linear GHZ bus.
\end{proposition}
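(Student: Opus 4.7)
The plan is to start from a canonical $-L-\cx-\cz-L-$ decomposition of any Hadamard-free Clifford (single-qubit $-L-$ layers being free) and merge the $-\cz-$ factor into the $n$ CNOT fan-outs produced by Proposition~\ref{prop:CX(n)} for the $-\cx-$ factor, at no extra fan-out cost.

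First, I would invoke Proposition~\ref{prop:CX(n)} to decompose $-\cx-$ into $n$ fan-outs $F_1,\ldots,F_n$ followed by a qubit relabeling (which accounts for the ``up to permutation'' clause in the statement). Crucially, in the proof of Proposition~\ref{prop:CX(n)} each qubit serves as the control of exactly one fan-out, so the controls $c_1,\ldots,c_n$ form a permutation of $[n]$. In particular, every edge of the $-\cz-$ graph is incident to the control of some $F_i$.

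Next, I would process the $F_i$ one at a time, starting from the fan-out adjacent to $-\cz-$ and moving outward, following the procedure in Figure~\ref{fig:HFreeEx}. For each $F_i$ with control $c_i$ and targets $T_i$, the identities of Figure~\ref{fig:CXCZrules} would let me (i) absorb selected CZ edges within $T_i$ into $S$ gates on those targets, (ii) isolate from the remaining CZ layer the sub-circuit touching $c_i$, and (iii) absorb that sub-circuit into $F_i$ by either removing some of the $S$ gates I just introduced or by extending $T_i$ with additional $H$-conjugated targets. The invariant I would maintain is that after processing $F_i$, the residual CZ layer on the far side has no edge incident to any qubit in $T_i$; the commutation rules also guarantee that any CZ gate newly produced while commuting through $F_i$ lies on an edge incident to $c_i$ itself, so it is absorbed in the same step and no cascade occurs.

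The hard part will be the bookkeeping in step (iii). I must verify both that every CZ edge of the original graph is eliminated by the end (this follows from the fact that each qubit is the control of a unique fan-out, so every edge is incident to some $c_i$ and gets absorbed when that $F_i$ is processed), and that the accumulated $S$-gate modifications and $H$-conjugated targets compose into exactly the diagonal/off-diagonal parts of the original Hadamard-free Clifford, without introducing spurious phases or extra CNOTs that would inflate the fan-out count. Ensuring this global consistency, rather than any single local rewrite, is the delicate step.
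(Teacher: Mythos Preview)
Your approach matches the paper's: decompose the $-\cx-$ layer into $n$ fan-outs with distinct controls via Proposition~\ref{prop:CX(n)}, then push the $-\cz-$ layer through them one at a time, absorbing at each step the CZ edges that touch the current control into either phase gates or additional $H$-conjugated targets of the fan-out. However, your stated invariant is off: after processing $F_i$, the residual CZ layer has no edge incident to the \emph{control} $c_i$, not to the targets $T_i$. Edges among targets generally survive the commutation (indeed, commuting a CZ between two targets through the fan-out leaves that edge in place and spawns two new edges to $c_i$, plus phase gates --- those new edges are precisely what step~(iii) absorbs). Your own termination argument, ``every edge is incident to some $c_i$ and gets absorbed when that $F_i$ is processed,'' already relies on the correct control-based invariant, so this looks like a slip rather than a conceptual error. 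Once the invariant is restated as ``the residual CZ layer avoids $\{c_1,\dots,c_i\}$,'' the width-shrinking argument and the bookkeeping in~(iii) go through exactly as in the paper, and the worry about ``global consistency'' disappears: each local rewrite is an exact circuit identity, so correctness is automatic.
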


\begin{proof}
We begin with the fact that a Hadamard-free Clifford transformation can be computed as a three-stage computation, -L-CX-CZ-. Let -CX- be written as $n$ fan-outs as in Proposition~\ref{prop:CX(n)}. First, as illustrated in Figure~\ref{fig:HFreeEx}, we can commute a layer of CZ gates through fan-out gates while reducing the width of the -CZ- layer. Given -CZ- circuit $CZ_1$ on qubits $1,...,n$, and a fan-out gate $F$ with control $k$ and targets $T_F\subset [n] - \{k\}$, we will describe in three steps how this commutation is achieved:
\begin{enumerate}
    \item Commute $CZ_1$ through the $F$ using well known circuit identities given in Figure~\ref{fig:CXCZrules}. We have $F \cdot CZ_1 = CZ_1'  \cdot S_T \cdot F \cdot S_{T\cup n}$, where $S_T$ denotes a layer of single qubit phase gates on qubits in $T$.
    \item Partition $CZ_1'=CZ_2\sqcup CZ_2'$, where $CZ_2 = \{\cz(i,j) | \cz(i,j)\in CZ_1', i,j\neq n \}$, and $CZ_2' = CZ_1' - CZ_2$. That is, $CZ_2'$ consists of all CZ gates on qubit $n$ and $CZ_2$ consists of all other CZ gates.
    \item There are two cases for gates CZ$(i,n)\in CZ_2'$.
    \begin{enumerate}
        \item $i\in T$. In this case, there exists CX$(n,i)\in F$; the CZ gate can be implemented using phase gates. 
        \item $i\notin T$. Rewrite CZ$(i,n)$ as $H_i$ CX$(n,i)H_i$; we notice this CX gate can be merged with $F$ to obtain a new fan-out $F'$ with the same control and targets $T' = T\cup \{i\}$, up to a layer of Hadamard gates.
    \end{enumerate}
\end{enumerate}

It follows that $F\cdot CZ_1 = CZ_2 \cdot F'$, where $CZ_2$ does not contain any CZ gates that act on the control of $F$, and $F'$ is a fan-out gate with the same control as $F$ and (possibly) more targets, up to conjugation by single qubit phase gates and Hadamard gates. $F$ and $F'$ can both be implemented using one GHZ state injection, up to some irrelevant single qubit gates.

We can repeatedly commute the CZ circuit to obtain $CZ_2,...,CZ_{n}$. Since the $n$ fan-out gates given by Proposition~\ref{prop:CX(n)} have distinct controls, each time we pass by a fan-out layer, the width of the CZ circuit decreases by 1. Hence $CZ_n = I$; we have implemented -CZ- inside -CX- with no additional cost, as desired.
\end{proof}

A simple corollary follows that a Hadamard-free operation can be implemented in depth $2n-1$, since we can still easily commute CZ gates through the additional CNOT layers. Furthermore, we also notice that if $CZ_1$ does not act on qubits $Q_n = \{i_1,...,i_m\}$, and $k\notin Q_n$ where $k$ is the control of $F$, then $CZ_2$ does not act on $Q_n$, and $F'$ does not add additional targets to qubits in $Q_n$. For all intents and purposes, the commutation rule given above leaves gates on $Q_n$ unchanged. 

Given this observation, in fact, a second -CZ- layer can also be implemented at no additional cost when implementing a -CX- circuit exactly using a GHZ bus. We will show this construction in the next subsection.

\subsection{Clifford Transformations}\label{ssec:Cliff}
Putting everything together, we arrive at the main result for the linear GHZ bus model. An example of this algorithm is given in Figure~\ref{fig:cliffEg}.

\begin{corollary}\label{cor:cliff(2n)}
    Any Clifford transformation can be implemented as a circuit with GHZ-state-injection depth $2n+1$ using a linear GHZ bus.
\end{corollary}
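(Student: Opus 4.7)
The plan is to combine the constructive propositions of this subsection via the canonical decomposition of an arbitrary Clifford $U$ in the circuit form $-L-CX-CZ-H-CZ-L-$, i.e., $U = L_2 \cdot CZ_2 \cdot H^{\otimes n} \cdot CZ_1 \cdot CX \cdot L_1$. Since single-qubit Clifford layers and the global Hadamard $H^{\otimes n}$ incur zero GHZ-injection depth, the task reduces to realizing the interior $-CX-CZ_1-H-CZ_2-$ in at most $2n+1$ GHZ-injection layers.

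I would view the interior as two Hadamard-free Cliffords separated by the free global Hadamard: writing $\mathrm{Hfree}_1 = CZ_1 \cdot CX$ and $\mathrm{Hfree}_2 = CZ_2$, the interior equals $\mathrm{Hfree}_2 \cdot H^{\otimes n} \cdot \mathrm{Hfree}_1$. By Proposition~\ref{prop:Hfree(n)}, each Hadamard-free Clifford admits a synthesis in $n$ fan-out GHZ-injection layers up to a residual qubit permutation. Since any qubit permutation commutes with $H^{\otimes n}$ (as $H^{\otimes n}$ commutes with the SWAPs that generate any permutation), the two syntheses can be chained with the Hadamard between them: (1) realize $\mathrm{Hfree}_1$ in $n$ fan-outs, producing $\pi_1 \cdot \mathrm{Hfree}_1$; (2) apply $H^{\otimes n}$ for free, yielding cumulative $\pi_1 \cdot H^{\otimes n} \cdot \mathrm{Hfree}_1$; (3) realize the (still Hadamard-free) operation $\mathrm{Hfree}_2 \cdot \pi_1^{-1}$ in another $n$ fan-outs, producing cumulative $\pi_2 \cdot U$; and (4) resolve the residual permutation $\pi_2$ in one additional GHZ-injection layer. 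The total is $n + n + 1 = 2n+1$ layers, with the underlying classical algorithm running in $O(n^3)$ time.

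The main obstacle is step (4): arguing that the residual permutation $\pi_2$ can always be resolved in a single GHZ-injection layer, since an arbitrary qubit permutation is not directly implementable as one fan-out or Pauli rotation. The plan to handle this is to replace step (3) with the exact synthesis of Proposition~\ref{prop:CX(2n)}, which uses $n$ fan-outs plus $n-1$ long-range CNOT slots to implement a Hadamard-free Clifford exactly and leaves no residual permutation. The $n-1$ extra CNOT slots are absorbed at no additional depth cost via the remark following Proposition~\ref{prop:Hfree(n)}, which shows they can host an additional CZ layer; using the CX-CZ commutation identities of Figure~\ref{fig:CXCZrules}, the same mechanism carries the residual permutation correction. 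Careful bookkeeping of these absorptions keeps the effective cost of step (3) bounded by $n+1$ layers, yielding the overall depth of $2n+1$, matching the Pauli-rotation-count bound of \cite{2102.11380} via a more direct layered construction.
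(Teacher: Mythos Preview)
Your initial plan and your identification of the obstacle are both sound, but the proposed fix does not close the gap. If you replace step~(3) by the exact synthesis of Proposition~\ref{prop:CX(2n)}, that step alone costs $2n-1$ GHZ-injection layers; added to the $n$ layers of step~(1) you are at $3n-1$, not $2n+1$. The remark following Proposition~\ref{prop:Hfree(n)} says only that a second \emph{CZ layer} can be commuted into the exact $2n-1$-layer CX synthesis at no extra depth --- it does not say that the $n-1$ CNOT slots become ``free'' to host an arbitrary block of $n$ fan-outs from step~(1). Fan-outs are not CZ gates, and the commutation identities of Figure~\ref{fig:CXCZrules} give you no mechanism to absorb them. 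The phrase ``careful bookkeeping'' is doing work it cannot do.

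The paper's proof avoids the two-Hadamard-free-blocks strategy entirely. It first rewrites the canonical form as $-L-CZ-CX-XCX-L-$ by commuting the global Hadamard through the outer CZ layer (turning it into an XCX layer and then absorbing $H^{\otimes n}$ into the final $-L-$). Then it performs a \emph{single} exact CX synthesis via Proposition~\ref{prop:CX(2n)} ($2n-1$ layers, alternating long-range CNOTs on odd layers and fan-outs on even layers, with controls in descending order). The CZ layer is pushed in from the left at zero cost exactly as in Proposition~\ref{prop:Hfree(n)}. The XCX layer is synthesized as an upside-down staircase of $n-1$ XCX-clique-flips (Proposition~\ref{prop:CZ(n-1)}); because XCX gates commute with CNOT \emph{targets} and the fan-out controls are descending, these clique flips slide in from the right and stack on top of the single CNOTs in the odd layers, costing only $+2$ in depth. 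The total is $(2n-1)+2=2n+1$. The key idea you are missing is this interleaving of the XCX staircase with the odd CNOT layers, which is what makes one exact CX synthesis suffice instead of two Hadamard-free blocks.
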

\begin{proof}
    Let us first consider an alternative decomposition of Clifford operations, -L-CZ-CX-XCX-L-, where -XCX- denotes a layer of X-controlled-NOT-gates: $\xcx:=H^{\otimes 2}\cdot  \cz \cdot H^{\otimes 2}$. We can obtain this decomposition by commuting the full layer of Hadamard gates through the second -CZ- layer in the scheme given by \cite{2003.09412}. 
    
    First, we can synthesize the -CX- layer using Proposition~\ref{prop:CX(2n)}, where odd layers $2i-1$ contain one CNOT gate on qubits $i,j$ where $j> i$, and even layers $2i$ contain a fan-out controlled by qubit $i$. From the left, we can push in a -CZ- circuit using techniques described in Proposition~\ref{prop:Hfree(n)}. From the right, we can first decompose the -XCX- circuit as an upside-down staircase using techniques described in Proposition~\ref{prop:CZ(n+1)}, and commute them to stack on top of the CNOTs in the odd layers. This is always possible since XCX gates commute with the target of a CNOT gate, and the controls of the fan-outs are in descending order. Overall, the depth is increased by 2. 
\end{proof}

\begin{figure}[h]\label{fig:cliffEg}
    \centering
    \includegraphics[width=\linewidth]{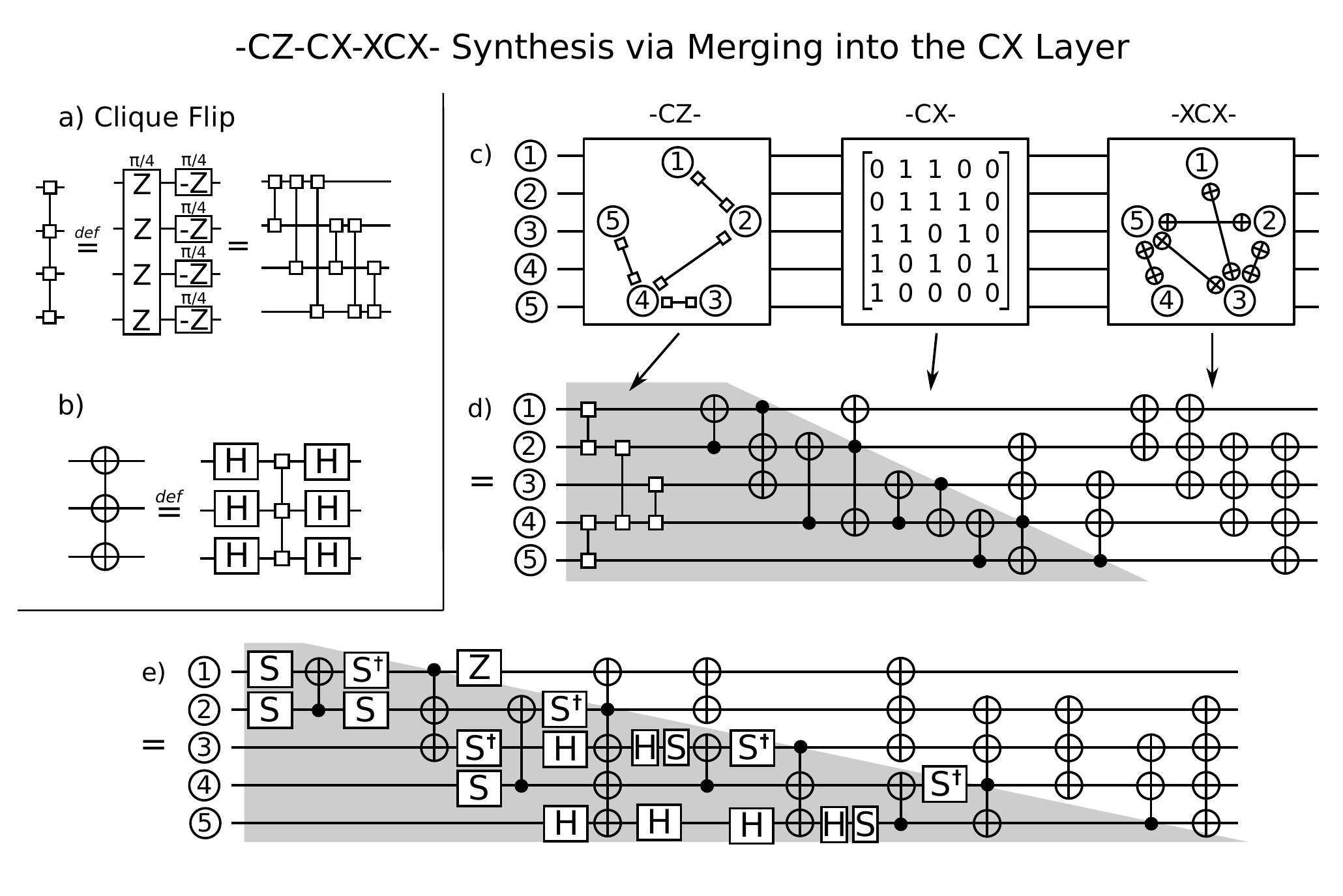}
    \caption{ a) The `clique flip' operation is a particular Pauli rotation with local corrections and can be shown to implement CZ on all pairs of the involved qubits. This operation is also considered by \cite{1707.06356, 2012.09061}.  b) XCX gates have an analog of the clique flip operation.  c) A Clifford circuit is equivalent to -CZ-CX-XCX- up to some local gates.  d) The CX circuit is synthesized using Proposition~\ref{prop:CX(2n)} with the controls on qubits with descending labels, and the -XCX- circuit can be built into an upward-facing triangle using Proposition~\ref{prop:CZ(n-1)}. We do not care about the CZ synthesis since in e) we use the method from Proposition~\ref{prop:Hfree(n)} to absorb the CZ circuit into the downward-facing part of the CX circuit. Since XCX clique flips commute with CNOT fan-out targets, we commute them through and stack on top of the CNOTs in the odd layers.  \label{fig:cliffEg}}
   
\end{figure}

We also present a similar result in the dual snake model:

\begin{corollary}\label{cor:cliff(1.5n)}
    Any Clifford transformation can be implemented as a circuit with GHZ-state-injection depth $\lceil \frac{3}{2}\rceil n + O(\sqrt{n})$ in a square-lattice architecture supporting the dual snake layout.
\end{corollary}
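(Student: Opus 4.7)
The plan is to reuse the -L-CZ-CX-XCX-L- decomposition from Corollary~\ref{cor:cliff(2n)} while exploiting the dual snake's ability to parallelize two disjoint GHZ state injections, so as to approximately halve the depths of the constituent blocks.

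First, I would adapt the -CX- synthesis of Proposition~\ref{prop:CX(2n)} to the dual snake. Proposition~\ref{prop:CX(n)} already realizes a -CX- transformation up to a qubit permutation using $n$ fan-outs. What remains is to undo that permutation, which can be done in $O(\sqrt{n})$ additional depth by scheduling the correction CNOTs so that the two snakes simultaneously carry fan-outs on disjoint qubit ranges. Together, this should yield a depth-$n + O(\sqrt{n})$ implementation of -CX-.

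Second, by Proposition~\ref{prop:Hfree(n)}, the -CZ- layer on the left can be absorbed into the -CX- fan-outs at no asymptotic cost; the commutation rule used there respects disjoint-range parallelism, since it introduces no gates on qubits outside the fan-out being commuted.

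Third, the -XCX- layer is placed on top of -CX- by the staircase-stacking idea from Corollary~\ref{cor:cliff(2n)}. XCX clique flips commute with the targets of CNOT fan-outs, and since $\xcx$ is Hadamard-conjugate to $\cz$, an analog of Proposition~\ref{prop:CZ(n/2)} lets the -XCX- staircase fit in depth $\lceil n/2\rceil + O(1)$ on the dual snake. Summing the contributions, the total depth is $n + \lceil n/2\rceil + O(\sqrt{n}) = \lceil 3n/2 \rceil + O(\sqrt{n})$.

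The main obstacle is the scheduling needed to achieve $n + O(\sqrt{n})$ for -CX-. The $n$ fan-outs and their correction CNOTs must be arranged so that at each time step the two snakes carry operations on disjoint qubit ranges; partitioning the register into blocks of size $\Theta(\sqrt{n})$ plausibly suffices, keeping most fan-outs local to one block and charging only $O(\sqrt{n})$ layers for bridging between blocks. A secondary subtlety is checking that the -XCX- staircase interleaves cleanly with the parallelized -CX- (rather than merely stacking strictly on top as in the linear case), so that the combined overhead of the two blocks stays within $O(\sqrt{n})$ of the claimed $\lceil 3n/2 \rceil$ term.
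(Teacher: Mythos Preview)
Your high-level plan coincides with the paper's: decompose the Clifford into a Hadamard-free part and a -CZ- (equivalently -XCX-) part, implement the Hadamard-free part in depth $n$ up to a qubit permutation via Proposition~\ref{prop:Hfree(n)}, implement the -CZ- part in depth $\lceil n/2\rceil + O(1)$ via Proposition~\ref{prop:CZ(n/2)}, and pay $O(\sqrt{n})$ for the leftover permutation. The depth budget you arrive at is correct.

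The gap is exactly the step you flag as the ``main obstacle.'' Your proposed resolution---scheduling the correction CNOTs into $\Theta(\sqrt{n})$-sized blocks so that the two snakes carry disjoint fan-outs---is not worked out and is not the right mechanism. The paper's argument is much simpler: what is left over after Proposition~\ref{prop:Hfree(n)} is a bare qubit \emph{permutation}, and on a square lattice any permutation can be implemented in depth $O(\sqrt{n})$ by a standard 2D sorting network \cite{10.1145/12130.12156} using nearest-neighbor SWAPs (the ancilla qubits otherwise reserved for the GHZ buses can be repurposed for this). No GHZ injections, no block partitioning, and no interaction with the fan-out schedule are required.

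Your ``secondary subtlety'' about interleaving the -XCX- staircase with the parallelized -CX- block is likewise unnecessary. The paper does not interleave these at all: it runs the Hadamard-free block, then the permutation, then the -CZ- block strictly in sequence, and the stated bound already follows from adding the three contributions.
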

\begin{proof}
It is sufficient to be able to implement a Hadamard-free transformation and a CZ transformation \cite{2003.09412}.  Up to a permutation, a Hadamard-free Clifford transformation can be implemented in depth $n$ by Proposition~\ref{prop:Hfree(n)} and a -CZ- circuit can be implemented in depth $\lceil\frac12 \rceil n +O(1)$ by Proposition~\ref{prop:CZ(n/2)}. Finally, a permutation can be implemented in depth $O(\sqrt{n})$ on a square lattice \cite{10.1145/12130.12156}, where adjacent horizontal and vertical SWAPs can be implemented efficiently by using the ancilla qubits otherwise dedicated to the GHZ bus. 
\end{proof}

\section*{Author Contribution Statement}

The authors worked together on developing the appropriate model of computation and iterating on synthesis approaches. WY contributed the main synthesis theorems discussed in Section~\ref{sec:LayeredSynth} and Appendix~\ref{app:CZsynth}. PR contributed the state injection derivations underpinning the model highlighted in Appendix~\ref{app:GHZ_preparation_injection}, as well as the counting arguments in Appendix~\ref{app:lower_bounds}. Both authors are supported by IBM Quantum.

\bibliography{CATbib}

\appendix

\section{Lower Bounds for Clifford Circuits}\label{app:lower_bounds}
Here we present some simple counting arguments.

\begin{proposition} Any sequence of $m$ many $n$-qubit Pauli rotations that implements an arbitrary element of the Clifford group will require $m \geq n$.
\end{proposition}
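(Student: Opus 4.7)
The plan is a counting argument comparing $|\mathcal{C}_n|$ to the number of distinct unitaries expressible as a product of $m$ Pauli rotations. I would first invoke the standard size formula for the Clifford group modulo global phase,
$$|\mathcal{C}_n| = 2^{n^2 + 2n}\prod_{k=1}^n (4^k-1),$$
which using the easy bound $\prod_{k=1}^n(4^k-1) \geq 2^{n^2}$ gives $\log_2|\mathcal{C}_n| \geq 2n^2 + 2n$.

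Next, I would upper-bound the number of achievable Cliffords. Each Pauli rotation $\exp(i\phi P)$ is characterized, modulo global phase, by a non-identity Pauli $P$ (at most $4^n$ options, absorbing the sign of $P$ into $\phi$) together with a continuous angle $\phi$. The key observation is that since $\mathcal{C}_n$ is a zero-dimensional subset of $U(2^n)$, the continuous angles cannot be used to enumerate distinct Clifford targets: for any fixed tuple $(P_1, \ldots, P_m)$, the real-analytic map $(\phi_i) \mapsto \prod_i \exp(i\phi_i P_i)$ from $\mathbb{T}^m$ into $U(2^n)$ intersects $\mathcal{C}_n$ in a finite set of size at most $c^m$. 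The cleanest case is when the $P_i$ commute, where the image is a torus and its Clifford intersection is a subgroup of $(\mathbb{Z}/4)^m$ modulo global phase (since $\exp(i\pi/4\,P)^4 = \exp(i\pi P) = -I$), giving $c = 4$. Taking the union over all $(4^n)^m$ Pauli tuples, the number of expressible Cliffords is at most $(c \cdot 4^n)^m = 2^{m(2n + \log_2 c)}$.

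Requiring this to cover $|\mathcal{C}_n|$ and taking logarithms gives $m(2n + \log_2 c) \geq 2n^2 + 2n$, which yields $m \geq n$ after solving, with any small constant slack absorbed by a slightly sharper lower bound on $|\mathcal{C}_n|$ (using $\prod_k (4^k-1) \geq \Theta(2^{n^2+n})$) and small $n$ handled by direct inspection (e.g., $|\mathcal{C}_1| = 24$, $|\mathcal{C}_2| = 11520$).

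The main obstacle is making the ``$c^m$ Cliffords per Pauli tuple'' bound fully rigorous in the non-commuting case, where the image is a smooth $m$-dimensional submanifold rather than a compact abelian subgroup. The bound should still hold via a real-analytic or algebraic-geometric estimate on the number of isolated solutions to the system of analytic equations defining the Clifford condition on $(\phi_1,\ldots,\phi_m)$, but this is substantially less elementary than the commuting case. For the appendix, a simpler presentation might just note that along a single-parameter Pauli rotation the Clifford-valued angles are discrete with period $\pi/4$, and then invoke an inductive / iterated bound to reduce to the commuting argument.
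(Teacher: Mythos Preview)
Your approach is the same counting argument as the paper's, but you have made it substantially harder than necessary by treating the rotation angle $\phi$ as a free continuous parameter. The paper's proof is two lines: it takes ``Pauli rotation'' to mean a \emph{Clifford} Pauli rotation $\exp(\pm i\tfrac{\pi}{4}P)$, so that each factor is specified by a signed $n$-qubit Pauli and hence by exactly $2n+1$ bits; then $m(2n+1)\geq \log_2|\mathcal{C}_n|\geq 2n^2+n$ gives $m\geq n$ on the nose, with no slack to absorb and no small-$n$ cases to check.

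Your more literal reading of the statement (arbitrary $\phi$) is defensible, and you correctly isolate the resulting difficulty: for a fixed non-commuting tuple $(P_1,\dots,P_m)$ one must bound the number of $(\phi_1,\dots,\phi_m)$ for which the product lands in $\mathcal{C}_n$. But you do not actually close this gap---the appeal to a real-analytic or algebraic-geometric isolated-solutions bound is a promissory note, not a proof, and the ``inductive reduction to the commuting case'' sketch is not obviously sound either (conjugating later factors by earlier non-Clifford rotations does not keep you inside the Pauli group). So as written your argument is incomplete precisely where the paper's is trivially complete. The cleanest repair is simply to adopt the paper's intended reading: in the context of Clifford synthesis (and the cited decomposition into $\exp(i\tfrac{\pi}{4}P)$ gates), each factor is already discrete, and the continuous-angle issue never arises.
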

\begin{proof} Recall that $\log_2 |\mathcal{C}_n| \geq 2n^2 +n$ bits are required to specify an element of the Clifford group \cite{2003.09412}. Each Pauli rotation on $n$ qubits encodes $2n+1$ bits, so at least $(2n^2+n) /(2n+1) = n$  are required.
\end{proof}

More generally, since we allow Pauli rotations, fan-out gates following a GHZ state injection, as well as arbitrary single-qubit Clifford gates, we need to be more careful when deriving general lower bounds for our model. 

\begin{proposition}\label{prop:lowerbound}
Any circuit consisting $m$ layers of parallelizable Pauli rotation and fan out gates that implements an arbitrary element of the Clifford group will require $m \geq 0.648n-2$.
\end{proposition}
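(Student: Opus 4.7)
The plan is to generalize the counting argument of the preceding proposition, tracking both the information content of each layer and the freedom contributed by free single-qubit Cliffords. The goal is to show $|\mathcal{L}|^m \cdot 24^n \geq |\mathcal{C}_n|$, where $|\mathcal{L}|$ is the number of distinct layer configurations and $\log_2|\mathcal{C}_n| \geq 2n^2+n$. To match the target constant, I need a bound of the form $\log_2|\mathcal{L}| \leq 3.086\,n + O(1)$.

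The first step is normalization. I would push all free single-qubit Cliffords to the right end of the circuit via commutation. Commuting a single-qubit Clifford $U$ through a Pauli rotation $\exp(i\phi P)$ yields $\exp(i\phi UPU^\dagger)\,U$, which is again a Pauli rotation on the same block; similarly, conjugation of a CNOT fan-out by single-qubit Cliffords keeps it within a bounded family of $k$-qubit gates on the same support. After this step, any depth-$m$ circuit is equivalent to $L_1 L_2 \cdots L_m \cdot S$, where each $L_i \in \mathcal{L}$ and $S$ is a single layer of single-qubit Cliffords contributing at most $24^n$ options.

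The second step is to upper bound $|\mathcal{L}|$. A layer is specified by an ordered composition $(k_1,\ldots,k_\ell)$ of $n$ into contiguous block sizes, together with a gate choice from at most $a_k$ options per block of size $k$. Here $a_k$ accounts for Pauli rotations with Clifford angles (at most $4^{k+1}$ up to sign) and fan-outs ($k \cdot 2^{k-1}$). The number of layer configurations on $n$ qubits equals $[x^n]\,(1 - A(x))^{-1}$ with $A(x) = \sum_{k\geq 1} a_k x^k$, and a singularity analysis of this rational generating function gives $|\mathcal{L}| \leq C\cdot \gamma^n$, where $1/\gamma$ is the smallest positive root of $A(x) = 1$. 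Combining with the Clifford count and rearranging gives $m \geq (2n^2 - O(n))/(n\log_2\gamma + O(1))$, which for the value $\log_2\gamma \approx 3.086$ produced by the explicit generating-function calculation yields $m \geq 0.648\,n - 2$ for large $n$.

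The main obstacle is keeping $\log_2\gamma$ below the target $3.086$. A naive independent count of partitions ($2^{n-1}$ of them) times per-partition gate choices gives $\gamma$ as large as $16$, yielding only the weaker bound $m \geq 0.5n$. The tight bound must exploit the combinatorial tradeoff through the generating function: larger blocks carry more gates but leave fewer ways to partition the line, and the smallest singularity of $1 - A(x)$ captures this tradeoff precisely. A secondary technical point is verifying that the normalization step does not inflate $\mathcal{L}$ by more than a constant per block --- that is, that the set of ``fan-outs modulo single-qubit Cliffords on the output side'' remains of size $O(k\cdot 2^k)$ per block of size $k$, so the enlargement does not spoil the $3.086$ constant.
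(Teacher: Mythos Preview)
Your high-level strategy matches the paper's: commute all single-qubit Cliffords to one end, upper-bound the number $|\mathcal{L}|$ of distinct layer configurations, and compare $|\mathcal{L}|^m \cdot 24^n$ against $|\mathcal{C}_n|$. The paper carries out the layer count by a direct sum rather than a generating function: for a layer acting nontrivially on $k$ qubits it bounds the choices by $3^k$ (non-identity Pauli letters) $\times\,2^{k-1}$ (ways to split the string into parallel segments) $\times\,2^{k/2}$ (an upper bound on $\prod_i l_i$ for the per-segment control choices), and sums the resulting geometric series to $\lesssim (6\sqrt 2)^n$. That is exactly where $\log_2(6\sqrt 2)\approx 3.085$ and hence $2/\log_2(6\sqrt 2)\approx 0.648$ come from. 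Your generating-function formulation is a reasonable alternative packaging of the same count and, executed carefully, could in principle yield a sharper growth constant than the paper's crude product bound.

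The genuine gap is in your per-block count $a_k$. The values you write, $4^{k+1}$ for rotations and $k\cdot 2^{k-1}$ for fan-outs, describe the \emph{unconjugated} gate families. After your normalization step the intermediate layers consist of \emph{conjugated} fan-outs, in which every participating qubit carries an arbitrary non-identity Pauli; the per-block count then scales like $k\cdot 3^k$, not $k\cdot 2^k$. You label this a ``secondary technical point'', but it is the point: the paper's $3^k$ factor is precisely these Pauli choices on conjugated fan-outs, and the product $3^k\cdot 2^{k-1}\cdot 2^{k/2}$ is what produces the base $6\sqrt 2$. With the $a_k$ you wrote, the dominant singularity of $1-A(x)$ will not sit at $1/(6\sqrt 2)$, and the constant you extract will not be $0.648$. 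In short, your ``verifying that normalization does not inflate $\mathcal{L}$'' is not a side issue to be checked later; it is false as stated, and fixing it (by counting conjugated fan-outs as the paper does) is what actually determines the answer.
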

\begin{proof} 
First, we observe that since the Pauli matrices are normalized by the Clifford group, we can commute all single-qubit Clifford gates to the beginning of the circuit. This may alter the elements of the Pauli rotation, or change the controls and the targets of the fan-out gate to arbitrary Paulis (instead of the Z- and X-targets). Let's call them conjugated fan-out gates. Additionally, since $\frac\pi2$ Pauli rotations are local, we may ignore the sign of a Pauli rotation. WLOG, we can consider a canonical form where circuits consist of one layer of single-qubit Clifford gates followed by $m$ layers of parallelizable Pauli rotations and conjugated fan-out gates, and we are interested in finding a lower-bound on $m$ such that any $n$-qubit Clifford operation requires at least $m$ such layers. Similarly as above, we proceed by finding an upper bound on the number of bits required to specify one such layer of global gates. We will furthermore assume all the gates are conjugated fan-out gates since they require strictly more information to specify compared to a Pauli rotation acting on the same qubits.

Suppose the layer of the global gate acts non-trivially on $k$ qubits. Then there are $3^k$ ways to specify the non-identity matrices and $k-1$ locations where the string can be uniquely split. It remains to specify a control qubit for each segment of the chain, for which there are at most $l$ choices for a segment containing $l$ non-identity elements. Overall, with $s$ segments, there are $l_1\times...\times l_s$ choices where $l_1+...+l_s = k$, which is upper-bounded by $2^{\frac{k}{2}}$ when $s=k/2$. the total number of possibilities is at most:

\begin{align}
\sum_{k=0}^{n}\left(  3^{k} \cdot  2^{k-1} \cdot 2^{\frac{k}{2}} \right) = \frac{(6\sqrt{2})^{n+1} - 1}{6\sqrt{2}-1}  \leq \frac{6\sqrt{2}}{6\sqrt{2}-1}(6\sqrt{2})^n
\end{align}

Finally, there are $24^n$ choices for the layer of single qubit gates. Since there are $2^{2n^2 +n}$ Clifford operations, we will require at least
\begin{align}
    m\geq \frac{2n^2+n - n\log(24)}{n\log(6\sqrt{2}) - \log(\frac{6\sqrt{2}}{6\sqrt{2}-1})}\geq 0.648 n - 2 \label{eq:lowerbound}
\end{align}

%In addition to specifying what single-qubit Pauli matrix is applied to each qubit, constituting $2n$ bits, we need to specify how the line of $n$ qubits is partitioned among the rotations: e.g., $\exp(i\frac{\pi}{4} (X \otimes Y) )\otimes\exp(i\frac{\pi}{4} Z )$ or $\exp(i\frac{\pi}{4} X )\otimes\exp(i\frac{\pi}{4}(Y \otimes Z) )$. Naively there are $n-1$ possible locations where we can `split' the string of Pauli rotations, giving $2^{n-1}$ possibilities. However, observe that $\exp(i\frac{\pi}{4} (I \otimes Z)) = I \otimes \exp(i\frac{\pi}{4} Z) \propto \exp(i\frac{\pi}{4} I)  \otimes \exp(i\frac{\pi}{4} Z) $, so single-qubit identity Pauli matrices introduce some redundancy.

%We account for this redundancy by enumerating the number of non-identity matrices $k$. Then there are $3^k$ ways to specify the non-identity matrices, and an $k-1$ locations where the string can be uniquely split. Each partition of the chain also needs a sign bit, so we can count the total number of possibilities as follows: $2$ possibilities for the sign of the first partition, and then $3$ possibilities at each potential split (no split, split with $+1$, split with $-1$). The total number of possibilities is:
%\begin{align}
%\sum_{k=0}^{n}\left(  3^{k} \cdot 2 \cdot 3^{k-1} \right) = \frac{9^{n+1} - 1}{12}  \leq \frac{9^{n+1}}{12}
%\end{align}
%Which is at most $n \log_2 9 - \log_2 12 \leq 3.17n$ bits. So since there are at least $2n^2$ many Clifford gates, $0.63n$ layers are required.
\end{proof}

\section{Derivations, Proofs, and Examples}\label{app:derivations_examples}

In this section, we give additional details about our model and detailed proofs of the CZ synthesis schemes with some illustrative examples.

\subsection{GHZ Preparation and Injection}\label{app:GHZ_preparation_injection}

In Section~\ref{sec:Model} and Figure~\ref{fig:LNN_GHZ_Bus} we gave an overview of the capabilities of the architectures considered in this manuscript. In this section, we present additional details as to how these capabilities are achieved.

A quantum circuit for synthesizing GHZ states on a GHZ bus is presented in Figure~\ref{fig:GHZ_prep}~a). This constant-depth circuit requires two CNOT layers and one measurement layer to execute in an architecture where CNOT gates are native. However, in a surface code architecture, there are more direct ways of implementing long-range GHZ states: a large rectangular ancilla patch storing a single qubit of data can be prepared in a single code cycle \cite{1808.02892}. Then, the circuit in the figure merely presents what is happening `at a logical level' and highlights that the ability to synthesize GHZ states primarily stems from the ability to perform mid-circuit measurements and apply Pauli corrections. Since $Z\otimes Z$ measurements are native operations in a lattice-surgery architecture, a similar interleaving trick as in \cite{2110.11493} can be applied to synthesize two GHZ states across two intersecting GHZ buses. 

\begin{figure}[!htb]
\centering
 \includegraphics[width=\linewidth]{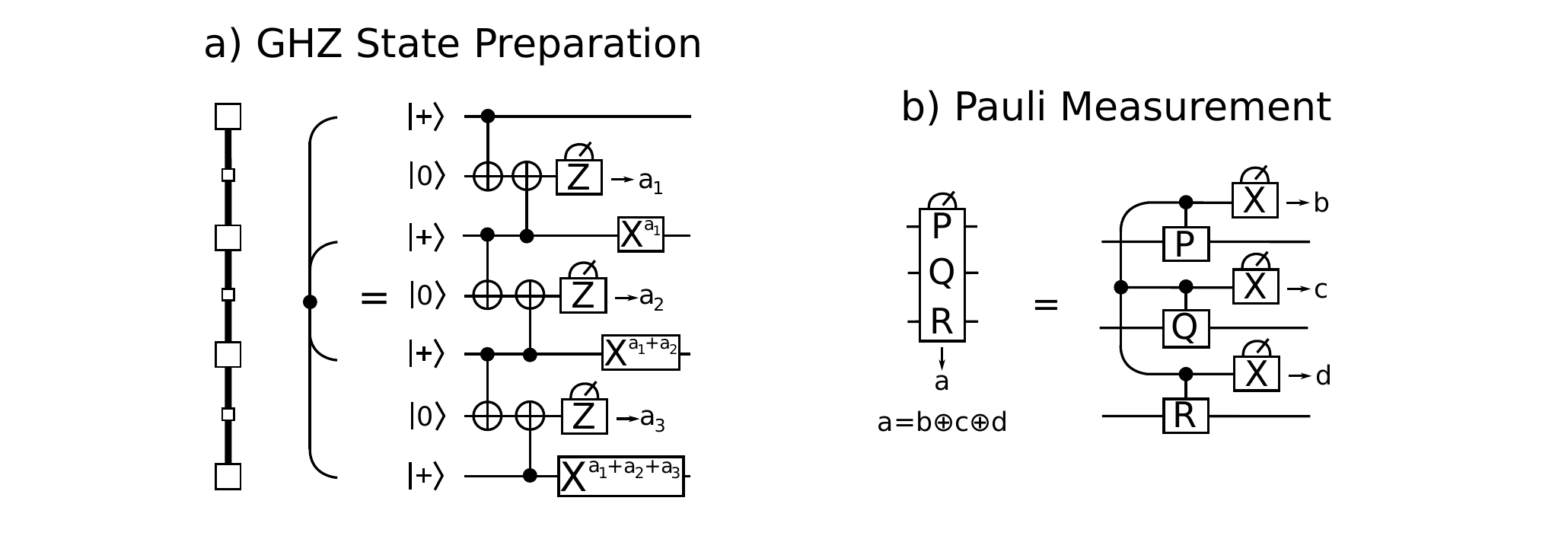}
  \caption{a) Preparation of a GHZ state on the ancillae of the GHZ bus using parity check qubits, as well as two CNOT layers and a mid-circuit measurement. b) GHZ state injection implementation of another primitive not leveraged in our work: measurement of an $n$-qubit Pauli observable.}\label{fig:GHZ_prep}
  \end{figure}

In our circuit constructions, we only consider the Pauli rotation and CNOT fan-out gates enabled by GHZ state injection. Another primitive operation enabled by GHZ state injection is multi-qubit Pauli measurement, as shown in Figure~\ref{fig:GHZ_prep}~b). The circuit shown can be seen as a surface code agnostic representation of ancilla-based measurement which forms a central tool in the architecture presented in \cite{1808.02892}. Of course, the cost model underpinning our constructions crucially relies on CNOT operations having roughly the same cost as GHZ state injections which is only true in surface codes in the first place. Nonetheless, it is interesting to show how these operations may be implemented in a more general architecture. Figure~\ref{fig:derivations} gives derivations using ZX calculus for the three primitive operations: fan-out, Pauli rotation, and Pauli measurement.

\begin{figure}[!htb]
\centering
  \includegraphics[width=\linewidth]{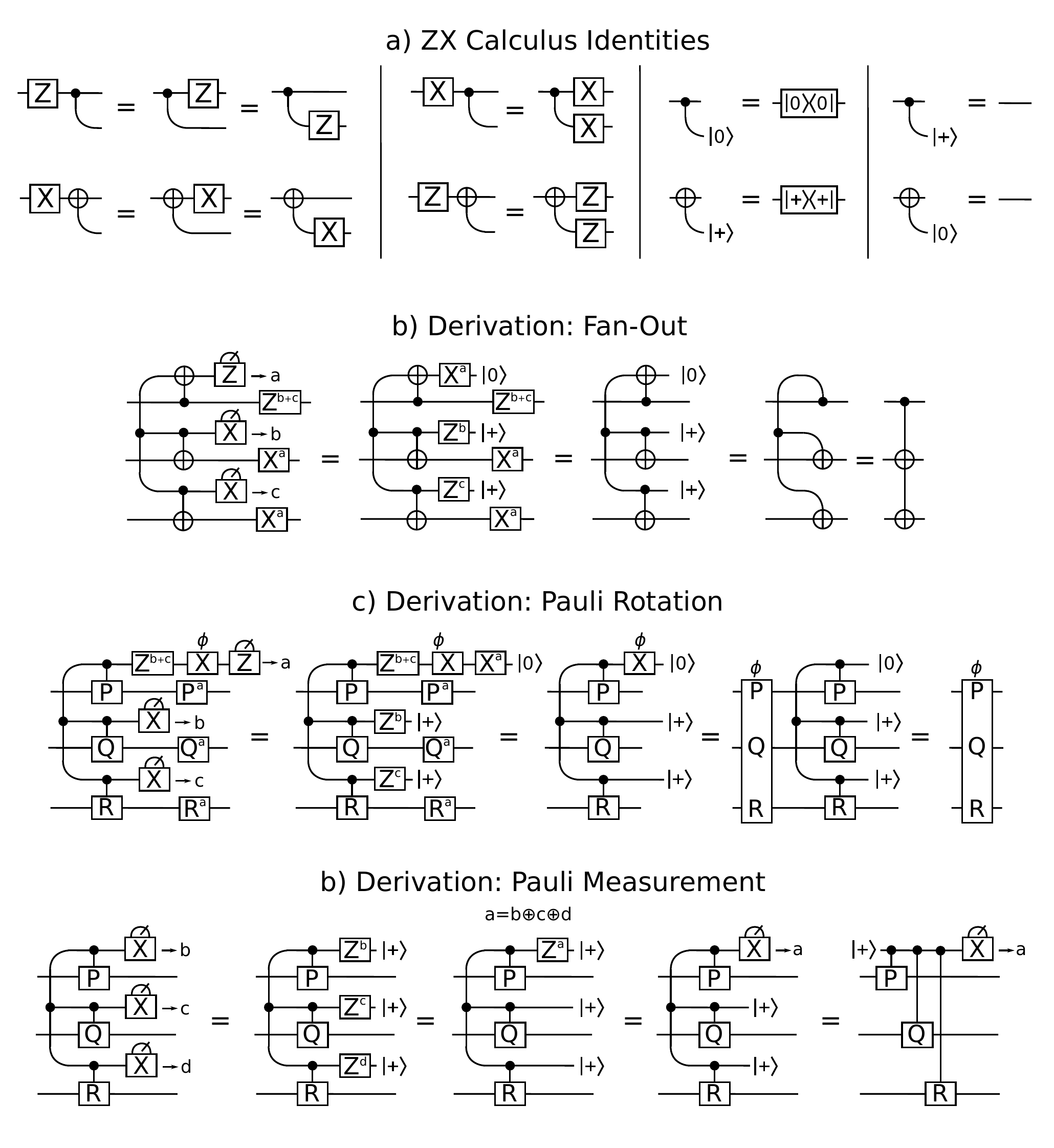}
  \caption{Derivation of GHZ state injection circuits using ZX calculus \cite{0906.4725}.}\label{fig:derivations}
  \end{figure}

Since each of these primitive operations demands one GHZ state to implement, it stands to reason that these should all roughly have the same power. Certainly, it is easy to see how to prepare a single GHZ state using CNOT fan-out. We can also prepare a GHZ state by applying $\exp(i\frac{\pi}{4} Y^{\otimes n})$ to $\ket{0^n}$, or by measuring the $X^{\otimes n}$ observable on $\ket{0^{n}}$ and applying a Pauli correction. Interconversion of the operations is less simple, and circuits achieving these are given in Figure~\ref{fig:interconversion}.  We find that to transform one of these operations into any of the other two, an additional ancilla qubit is required. This makes sense for Pauli measurements since they require an additional degree of freedom to be measured in order to avoid damaging the coherence of the input state.  However, the smallest circuit without an extra ancilla for implementing fan-out using Pauli rotations requires two clique flips: one on all the qubits, and on all but the target. Even with the additional ancilla, the synthesis of fan-out gates demands an additional CNOT gate. But even with these limitations, there is plenty of evidence that these three circuit primitives have roughly the same capabilities even up to constant factors.

\begin{figure}[!htb]
\centering
  \includegraphics[width=\linewidth]{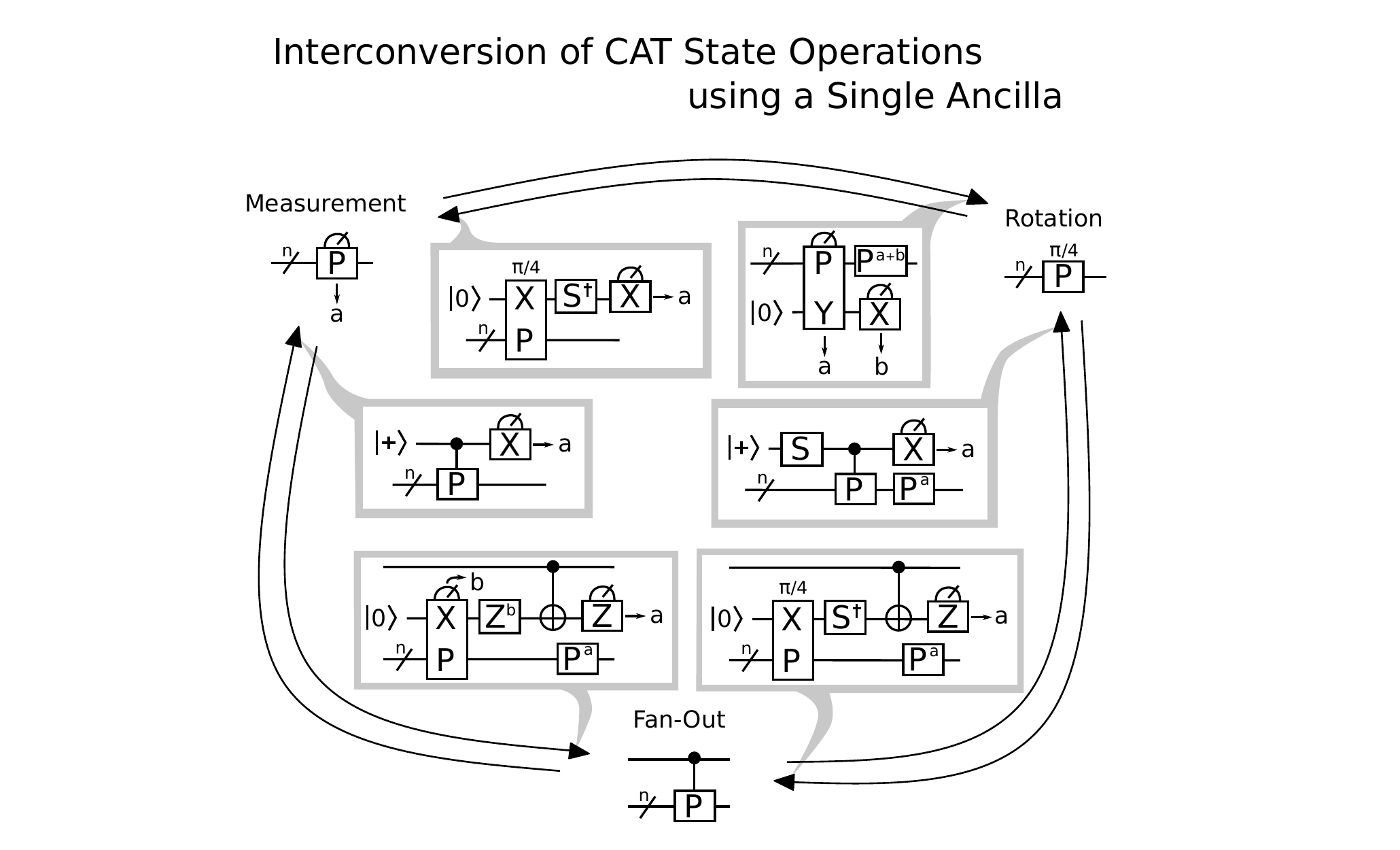}
  \caption{Circuits for interconversion of the three GHZ state enabled $n$-qubit gates considered in this paper: Pauli measurement, Pauli rotation, and fan-out. All of these conversions require an ancilla qubit, and synthesis of fan-out requires an additional CNOT gate. But otherwise, this is evidence that these three operations have roughly the same power.}\label{fig:interconversion}
  \end{figure}

\subsection{CZ Synthesis Proofs}\label{app:CZsynth}

Here we give the proofs underpinning Propositions~\ref{prop:CZ(n+1)}, \ref{prop:CZ(n+1)}, and \ref{prop:CZ(n/2)} for the synthesis of CZ transformations. We represent an $n$-qubit CZ transformation as a graph $G(V,E)$, where each vertex in $V$ corresponds to a qubit, and each edge $(v_1, v_2) \in E$ indicates a CZ gate between qubits $v_1,v_2$. Since all CZ gates commute and are self-inverse, concatenating two CZ transformations $G_1(V,E_1), G_2(V,E_2)$ gives a new CZ transformation $G_3(V,E_3)$ where $E_3$ is the symmetric difference of $E_1, E_2$. With a slight abuse of notation, let us also denote $G$ as the adjacency matrix. Then, $G_3 = G_1\oplus G_2$. We assume WLOG that all CZ transformations share a common set of vertices. Note in this representation, the application of a clique flip corresponds to the concatenation of a complete graph on a subset of vertices; in other words, it `flips' all the edges corresponding to a clique, hence the name.
{{\color{blue}}
\begin{proof}[Proof of Proposition~\ref{prop:CZ(minrank)}]
    If $t(G) = 1$, then $G$ contains exactly one clique, and $\minrank(G) = t(G) = 1$. In particular, the minrank is achieved by choosing $D = I$. 
    
    First, let's show $\minrank(G)\leq t(G)$. Suppose $t(G) = m$ for some $m>1$, $G = K_1\oplus...\oplus K_m$ where $t(K_i) = 1$ for each $i\in[m]$. Since minrank is sub-additive, $\minrank_2(G) =  \minrank_2(\bigoplus_{i=1}^m K_i) \leq \sum_{i=1}^m \minrank_2(K_i) = t(G)$.

    Then, we'll show $t(G)\leq \minrank(G)+1$. Suppose $\minrank_2(G) = r$; then, there exists $G^* = D^* \oplus G$ where $\mathrm{rank}_{\mathbb{F}_2}(G^*) = r$. Since $G^*$ is symmetric, we can use Lempel's factorization \cite{10.1137/0209059} to find an $n\times r'$ dimensional factor $F$ such that $G^* = FF^T$, where $r' = r+1$ if $G^* = G$ and $r' = r$ otherwise. Let $f_i$ be the $i^{th}$ column of $F$, we can rewrite $G^* = \bigoplus_{i=1}^{r'}f_if_i^T$. Note $f_if_i^T = diag(f_i)\oplus K_i$, where $K_i$ is a complete graph on vertices $\{j|F_{ij}=1\}$. Therefore, $t(G) = t(K_1\oplus...\oplus K_{r'}) \leq \minrank_2(G)+1$.
\end{proof}
}

For completeness, we also present the construction from \cite{2012.09061} for synthesizing a CZ transformation using $n-1$ clique flips, since this idea plays a central role in the proof of our main result. The method involves iteratively disentangling single qubits from the transformation. An example is shown in Figure~\ref{fig:cliques}~a).

\begin{proof}[Proof of Proposition~\ref{prop:CZ(n-1)}]
    To synthesize $G(V,E)$, we can find $G_1,...,G_m$ s.t. $G_1\oplus...\oplus G_m = G$, where each $G_i$ consists of a clique, implementable using one clique flip via a GHZ state injection. There is a simple algorithm to find these cliques. For each $i\in [n]$, let $S_i = [\bigoplus_{j=1}^{i-1} G_j]\oplus G$ be the graph left over after applying all $G_j$ up to $i-1$, and set $G_i$ to be the complete graph on $N_{S_i}(v_i)\cup \{v_i\}$\footnote{$N_G(v)$ refers to the $v$'s neighbors in $G$; i.e., $N_{G(V,E)}(v):= \{u|u\in V, (u,v)\in E\}$}; that is, a clique on $v_i$ and its neighbors in $S_i$. Notice that $v_i$ becomes an isolated vertex in $S_{i+1} = G_i \oplus S_i$ since the concatenation of $G_i$ will cancel out any edges from $v_i$ in $S_i$. It follows that $S_n = [\bigoplus_{i=1}^{n-1} G_i]\oplus G$ have only isolated vertices; hence $G = G_1\oplus...\oplus G_{n-1}$, as desired.
\end{proof}

We observe that the triangular structure of the resulting circuit can be exploited to execute two CZ transformations simultaneously, thereby allowing us to synthesize a -CZ-L-CZ- construction using depth $n+1$ despite requiring $2n-2$ clique flips. An example is shown in Figure~\ref{fig:cliques}~b). Since the supports of these clique flips never overlap, using the more powerful dual snake architecture is not necessary: a single GHZ bus suffices.

\begin{proof}[Proof of Proposition~\ref{prop:CZ(n+1)}]
    First, let the vertices be ordered from $1,...$ to $n$. For the first CZ transformation, pick $G^1 = G^1_1\oplus ...\oplus G^1_{n-1}$ as in Proposition~\ref{prop:CZ(n-1)} where $G^1_i$ accounts for the CZ gates related to $v_i$. Then, $v_1,...,v_{i-1}$ are isolated vertices in $G^1_i$; hence, the corresponding clique flip does not act on qubits $1,...,i-1$. This will be our first staircase.
    
    For the second CZ transformation, let us fix the vertices in reverse: pick $G^2 = G^2_1\oplus...\oplus G^2_{n-1}$ where $G^2_i$ accounts for the CZ gates related to $v_{n-i+1}$. Here, the clique flip corresponding to $G^2_i$ only acts on $v_1,...,v_{n-i}$. This will be our second, upside-down staircase.

    It follows that for $i = 3,..., n-2$, $G^1_i$ and $G^2_{n-i+1}$ can be implemented in parallel, giving us a total depth of $n - 3 +4 = n +1$.
\end{proof}

Finally, we give a construction that exploits the power of the dual snake model to implement two clique flips simultaneously even if their supports overlap, provided they act on disjoint sets of qubits. This capability synthesizes a CZ transformation using GHZ-state-injection depth $\lceil n/2\rceil +1$. This result immediately applies our stated bound for stabilizer state preparation and is a key ingredient in the construction of Clifford gate synthesis.

The basic idea is to cut the graph into two halves. Once the two halves have been separated, each can be synthesized using Proposition~\ref{prop:CZ(n-1)}. The strategy to separate the graph is to deal with the first and last qubits simultaneously, then the second and second-to-last qubits, and so on, observing that the separation can either be performed using a single clique flip or two non-overlapping ones. An example of this construction is given in Figure~\ref{fig:CZeg}.

\begin{proof}[Proof of Proposition~\ref{prop:CZ(n/2)}]
    Let us find a bipartition of the vertices $V = V_l\sqcup V_r$, where $V_l = \{v_1,...,v_{\lceil n/2\rceil}\}$ and $V_r = \{v_{\lceil n/2\rceil+1},...,v_n\}$. This bipartition defines a cut on $G$.
    
    We will first address the CZ gates that cross the cut. For $i = 1,...,\lceil n/2\rceil$, let $S_i = [\bigoplus_{j=1}^{i-1} G^c_i ]\oplus G$ and let $C_i$ be the edges that cross the cut in $S_i$, where $G^c_i$ is constructed as:
    \begin{enumerate}
        \item Two cliques, one on $V^l_i = N_{C_i}(v_i)\cup \{v_i\}$, and one on $V^r_i = N_{C_{i}}(v_{n-i+1}) \cup \{v_{n-i+1}\}$, if $(v_i, v_{n-i+1})\notin C_i$
        \item One clique on vertices $V_i = V^l_i\cup V^r_i$, if $(v_i, v_{n-i+1})\in C_i$.    \end{enumerate}
        
     In either cases, we observe that 1), $C_{i+1}$ does not contain any edges that have endpoints $v_i, v_{n-i+1}$, and 2), vertices $v_1,...,v_i$ and $v_n-i+1,... v_n$ are isolated in $G^c_{i+1}$. As a result of 1), $C_{\lceil n/2\rceil+1}$ is empty; it remains to deal with the edges contained in $V_l$ and $V_r$. Given 2), we can implement the clique flips for the two disconnected sub-graphs in parallel using the staircases given in Proposition~\ref{prop:CZ(n+1)}. The parallelization increases the depth by at most 1.

\end{proof}

\begin{figure}[!htb]
\centering
  \includegraphics[width=\linewidth]{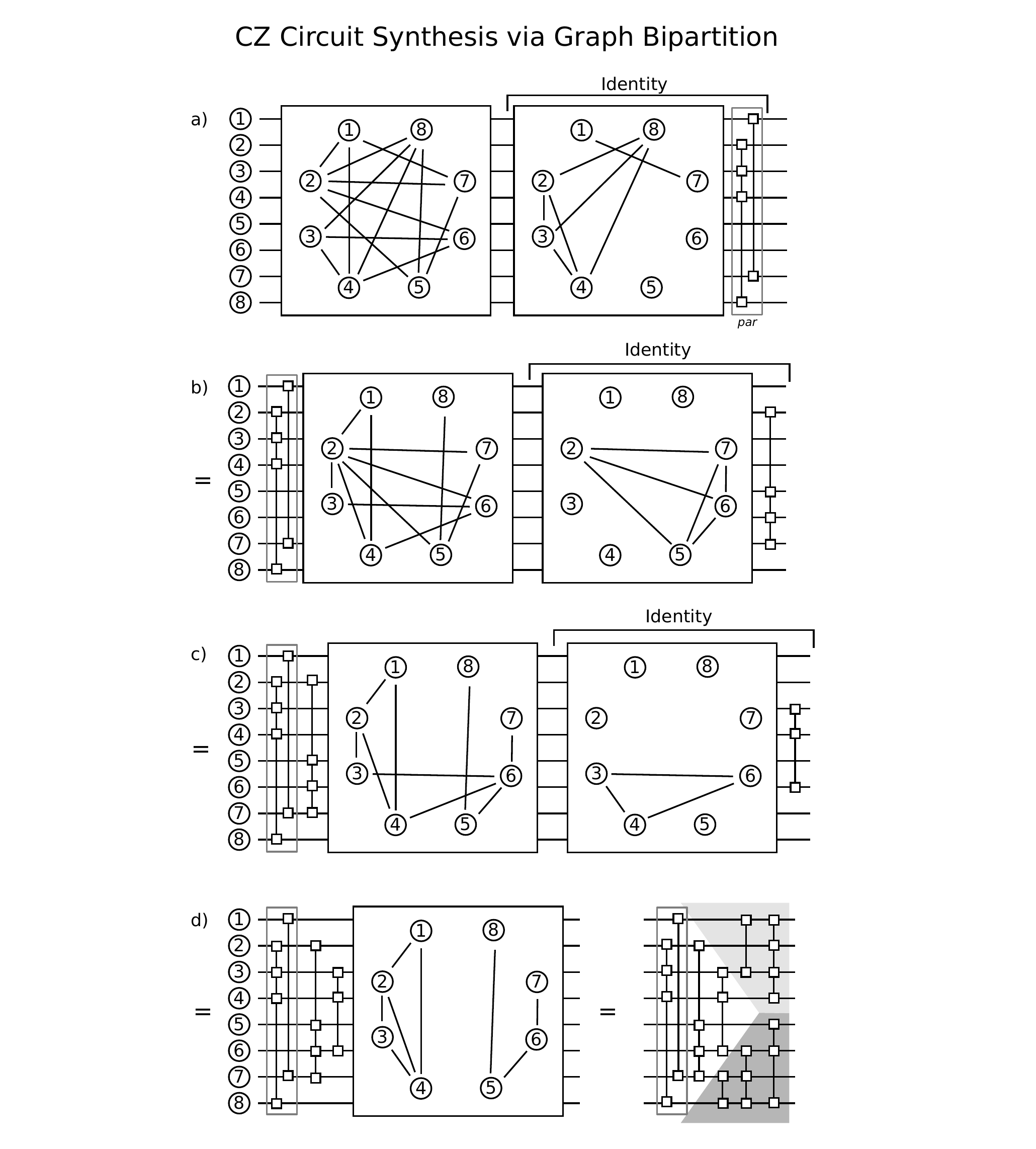}
    \caption{Example of the synthesis algorithm from Proposition~\ref{prop:CZ(n/2)}. The method first servers all edges between the groups $V_l = \{1,2,3,4\}$ and $V_r = \{5,6,7,8\}$. a) Considering qubits 1 and 8 which are not connected (case 1),  we can eliminate the edges across the cut using two clique flips. While these clique flips cannot be parallelized in a model with one GHZ bus, they can be with two GHZ buses. b) Considering qubits 2 and 7 which are connected (case 2), we can eliminate the edges with one clique flip. b) Similarly 3 and 6 correspond to case 2. We have removed all edges across the bipartition using $\lceil n/2\rceil$ layers. c) Finally, the two remaining graphs on $V_l$ and $V_r$ can be synthesized using Proposition~\ref{prop:CZ(n-1)}, and due to the triangular structure of the circuits, this requires only one additional layer. \label{fig:CZeg} }
  \end{figure}

\end{document}